\newtheorem{definition}{Definition}
\newtheorem{lemma}{Lemma}
\newtheorem{proposition}{Proposition}
\newtheorem{theorem}{Theorem}
\theoremstyle{remark}
\newtheorem{remark}{Remark}
\title{Optimal importance sampling for L\'evy Processes}
\author{Adrien Genin \\ Opus Finance Research \\ and \\Laboratoire de Probabilit\'es
et Mod\`eles Al\'eatoires\\ Universit\'e Paris Diderot --- Paris 7 \and Peter Tankov\\ Laboratoire de Probabilit\'es
et Mod\`eles Al\'eatoires\\ Universit\'e Paris Diderot --- Paris 7}
\date{}
\begin{document}

\maketitle

\begin{abstract}
We develop generic and efficient importance sampling estimators for Monte Carlo evaluation of prices of single- and multi-asset European and path-dependent options in asset price models driven by L\'evy processes, extending earlier works which focused on the Black-Scholes and continuous stochastic volatility models \cite{guasoni2008optimal,robertson2010sample}. Using recent results from the theory of large deviations on the path space for processes with independent increments \cite{leonard2000large}, we compute an explicit asymptotic approximation for the variance of the pay-off under an Esscher-style change of measure. Minimizing this asymptotic variance using convex duality, we then obtain an easy to compute asymptotically efficient importance sampling estimator of the option price. Numerical tests for European baskets and for Asian options in the variance gamma model show consistent variance reduction with a very small computational overhead. 
\end{abstract}
 
\textbf{Key words:} L\'evy processes, option pricing, variance reduction, importance sampling, large deviations

\textbf{MSC2010:} 91G60, 60G51

\section{Introduction}

The aim of this paper is to develop efficient and easy to implement importance sampling estimators of expectations of functionals of L\'evy processes, corresponding to option prices in exponential L\'evy models. 

L\'evy processes are stochastic processes with stationary independent
increments. They are used as models for asset
prices when jump risk is important, either directly (as in the variance gamma model \cite{madan98}, normal inverse Gaussian process \cite{bns_nig}, CGMY model \cite{finestructure}) or as building blocks for other models (affine processes, stochastic volatility models with jumps, local L\'evy models \cite{cgmy.local.levy} etc.). To model a financial market with a L\'evy process, we assume that the market consists of a risk-free asset $S^{0}_t\equiv 1$
and $n$ risky assets $S^{1},\dots,S^{n}$ such that $$S^i_t = S^i_0
e^{X^i_t},$$ where $(X^1,\dots,X^n)$ is a L\'evy process under the risk-neutral probability $\mathbb P$.

We consider a derivative written on $(S^i)_{1\leq i \leq n}$ with
pay-off $P(S)$ which depends of the entire trajectory of
the stocks. We are interested in computing the price of this derivative, given by the risk-neutral expectation $\mathbb E \left[ P(S) \right]$. 

Several methods for computing this expectation are available in the literature. When the price process $S$ is one-dimensional and the pay-off $P$ only depends on the terminal value $S_T$, the Fourier method of Carr and Madan \cite{carrmadan} may be used.  When the dimension of $S$ is low (say, up to 3-4), and the pay-off is only weakly path-dependent, such as for barrier or American options, one can use deterministic numerical methods for partial
  integro-differential equations \cite{voltchkova05}, Fourier
  time stepping  \cite{lord2008fast,fang2009pricing} and related deterministic
  methods. Finally, for high dimensional problems, or in the case of strong path dependence, the Monte Carlo method, on which we focus in this paper is the only one available.

The standard Monte Carlo estimator of $\mathbb E \left[ P(S) \right]$ is defined as the empirical mean
$$
\widehat P_N := \frac{1}{N} \sum_{j=1}^N P(S^{(j)}),
$$
where $S^{(j)}, j=1,\dots,N$ are i.i.d.~samples with the same law as $S$. 
Note that simulation methods exist for all parametric L\'evy
          models, including multidimensional L\'evy processes (see chapter 6 of \cite{levybook} for a review). 


The precision of standard Monte Carlo is often too low for
real-time applications, particularly when $\mathbb E[P(S)]$ is small compared to $\sqrt{\text{Var}\, P(S)}$ and various error reduction techniques must be applied.
\begin{itemize}
	\item The Multilevel Monte Carlo method (see \cite{giles2008multilevel} for a general introduction and \cite{xia2012multilevel,ferreiro2014multilevel} for an application to L\'evy models) reduces the variance by optimizing the number of discretization steps for each path. In practice, a large number of paths are simulated with a coarse discretization, and only a small number of paths are discretized finely. 
\item The Quasi Monte Carlo method (see \cite{leobacher2006stratified,avramidis2006efficient} for the case of jump diffusions / L\'evy processes) replaces the i.i.d.~samples of $S$ with well-chosen deterministic samples (low-discrepancy sequences). 
\item Finally, the importance sampling method, which is the focus of this paper, consists in simulating the paths of $S$ under a different probability measure, which allows a better exploration of the region of interest. See e.g.,  \cite{glasserman1999asymptotically} for an application in the context of Gaussian vectors, \cite{guasoni2008optimal} for the case of path-dependent options in the Black-Scholes model, \cite{robertson2010sample} for an application to stochastic volatility models, and \cite{kassim2015importance,kawai2006importance,kawai2008optimal} for applications to L\'evy processes / jump-diffusions.
\end{itemize}

The importance sampling estimator is based on the following identity, valid for any probability measure $\mathbb Q$, with respect to which $\mathbb P$ is absolutely continuous. 
$$
\mathbb E[P(S)] = \mathbb E^{\mathbb Q}\left[\frac{d\mathbb
    P}{d\mathbb Q}P(S)\right].
$$
This allows one to define the {importance sampling estimator}
$$
\widehat P^{\mathbb Q}_N := \frac{1}{N} \sum_{j=1}^N \left[\frac{d\mathbb
  P}{d\mathbb Q}\right]^{(j)} P(S^{(j)}_{\mathbb Q}),
$$
where $S^{(j)}_{\mathbb Q}$ are i.i.d.~sample trajectories of $S$ under the
measure $\mathbb Q$. For efficient variance reduction, one needs then to find a probability measure $\mathbb Q$
such that $S$ is {easy to simulate} under $\mathbb Q$ and the variance 
$$
\text{Var}_{\mathbb Q} \left[ P(S)\frac{d\mathbb P}{d\mathbb Q} \right] 
$$ 
is considerably smaller than the original variance $\text{Var}_{\mathbb P} \left[P(S) \right]$. 
 
For L\'evy processes, a natural choice of probability measure for importance sampling is given by the {Esscher transform}
$$
\frac{d\mathbb P^\theta}{d\mathbb P}\Big|_{\mathcal F_T} = \frac{e^{\langle \theta, X_T\rangle}}{\mathbb E\left[e^{\langle \theta, X_T\rangle}\right]},
$$
which is well defined for all $\theta \in \mathbb R^n$ such that 
$\mathbb E\left[e^{\langle \theta, X_T\rangle}\right]$. This choice was studied, e.g., in \cite{kawai2008optimal,kawai2006importance}, where the optimal variance reduction parameter $\theta^*$ was estimated numerically.

In this paper, to allow for more freedom in choosing the importance sampling probability for path-dependent payoffs, we propose to use the path-dependent Esscher transform,
$$
\frac{d\mathbb P^\theta}{d\mathbb P} = \frac{e^{\int_{[0,T]} X_t
    \cdot \theta(dt)}}{\mathbb E\left[e^{\int_{[0,T]} X_t
    \cdot \theta(dt)}\right]},
$$
where $\theta$ is a (deterministic) bounded $\mathbb R^n$-valued
signed measure on $[0,T]$. Under $\mathbb P^\theta$, the process $X$ has independent increments and is thus easy to simulate. 
The optimal choice of $\theta$ should minimize the variance of the estimator under $\mathbb P^\theta$,
$$
\text{Var}_{\mathbb P^\theta} \left( P \frac{d\mathbb P}{d\mathbb P^\theta}\right) = \mathbb E_{\mathbb P}\left[P^2\frac{d\mathbb P}{d\mathbb P^\theta}\right] - \mathbb E\left[P\right]^2.
$$

Importance sampling is most effective in the context of \emph{rare event simulation}, e.g., when $P(S)=0$ for most of the trajectories of $S$ under the original measure $\mathbb P$. Since the theory of large deviations is concerned with the study of probabilities of rare events, it is natural to use measure changes appearing in or inspired by the large deviations theory for importance sampling. We refer, e.g., to \cite{dupuis2004importance} and references therein for a review of this approach and to the above quoted references \cite{glasserman1999asymptotically,guasoni2008optimal,robertson2010sample} for specific applications to financial models.  

The main contribution of this paper, inspired by the work of Guasoni and Robertson \cite{guasoni2008optimal} in the setting of the Black-Scholes model, is to use the large deviations theory to construct an easily computable approximation to the optimal importance sampling measure $\theta^*_{opt}$. Namely, we use Varadhan's lemma and the pathwise large deviation principle for L\'evy processes due to Leonard \cite{leonard2000large} to derive a proxy for the variance of the importance sampling estimator which is much easier to compute than the true variance. We propose then to use the parameter $\theta^*$, obtained by minimizing this proxy, in the importance sampling estimator. Numerical illustrations in Section \ref{num.sec} show that the variance obtained by using $\theta^*$ instead of $\theta^*_{opt}$ is very close to the optimal one, and that a considerable variance reduction is obtained in all examples with very little computational overhead. 

When the logarithm of the pay-off $P$ is concave, which is the case in many applications, the proxy for the variance may be further simplified using convex duality. The computation of the asymptotically optimal parameter $\theta^*$ then reduces to one finite-dimensional optimization problem for European options and to the solution of one ODE system (Euler-Lagrange equations)  for the path-dependent ones. In other words, additional complexity is the same as in the case of the Black-Scholes model studied in \cite{guasoni2008optimal}, even though our model is much more general and complex. 

The rest of this paper is structured as follows. In Section \ref{ld.sec} we recall the notation and results from the theory of large deviations which are used in the paper. Section \ref{main.sec} provides a representation for the proxy of the variance, a simplified representation in the case of concave log-payoffs and an easy to verify criterion for concavity. Section \ref{ex.sec} presents explicit computations for European basket and Asian options. Numerical illustrations of these examples, in the context of the variance gamma model, are provided in Section \ref{num.sec}. Lastly, the Appendix contains a technical lemma.

\section{Pathwise large deviations for L\'evy processes}\label{ld.sec}
In this section we recall the known results on large deviations for L\'evy processes, which will be used in the sequel, and introduce all the necessary notation. We first formulate the large deviations principle (LDP) on abstract spaces. Let $\mathcal X$ be a Haussdorf topological space endowed with its Borel $\sigma$-field. A {rate function} is a $[0,\infty]$-valued lower semi-continuous function on $\mathcal X$. It is said to be a {good rate function} is its level sets are compact. 
\begin{definition}[Large Deviation Principle]
A family $\{X^\varepsilon\}$ of $\mathcal X$-valued random variables is said to obey a {LDP} in $\mathcal X$ with rate function $I$ if for each open subset $G \subset \mathcal X$ and each closed subset $F \subset \mathcal X$
$$
\lim\sup_{\varepsilon \to 0} \varepsilon \log\mathbb P\left[X^\varepsilon \in F\right] \leq - \inf_{x\in F}I(x)
$$ 
and
$$
\lim\inf_{\varepsilon \to 0} \varepsilon \log\mathbb P\left[X^\varepsilon \in G\right] \geq - \inf_{x\in F}I(x).
$$
\end{definition}

The following result is the famous Varadhan's lemma, which allows to evaluate limits of functions of $X^\varepsilon$ in the large deviations asymptotics. More precisely, we need an extension of this lemma, given in \cite{guasoni2008optimal}, which allows the function $\phi$ to take value $-\infty$. This is necessary since the pay-off of the option may take zero value on part of the domain, and the function $\phi$ will  contain the log-pay-off in the sequel. 

\begin{lemma}[Varadhan's lemma]
Suppose that $\{X_\varepsilon\}$ satisfies the LDP with a good rate
function $I : \mathcal X \to [-\infty, \infty[$ and let $\phi : \mathcal X \to
[-\infty,\infty[$ be such that the set $\{\phi>-\infty\}$ is open and $\phi$
is continuous on this set. Assume further that for some $\gamma >1$,
$$
\lim\sup_{\varepsilon\to 0} \varepsilon \log \mathbb E \left[e^{\frac{\gamma\phi (Z_\varepsilon)}{\varepsilon}}\right] < \infty.
$$
Then,
$$
\lim_{\varepsilon\to 0}\varepsilon\log\mathbb E \left[e^{\frac{\phi(Z_\varepsilon)}{\varepsilon}}\right] = \sup_{x\in \mathcal X} \left\{ \phi(x) - I(x)\right\}.
$$
\end{lemma}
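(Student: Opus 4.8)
The plan is to prove the two inequalities separately. Write $V:=\{\phi>-\infty\}$, which by hypothesis is open with $\phi$ continuous on it; since $\phi(x)-I(x)=-\infty$ for $x\notin V$, we may replace $\sup_{x\in\mathcal X}\{\phi(x)-I(x)\}$ by $\sup_{x\in V}\{\phi(x)-I(x)\}=:L$, and assume $L\in(-\infty,\infty)$, the cases $L=\pm\infty$ being immediate. For the lower bound, fix $x\in V$ and $\delta>0$; by continuity of $\phi$ on the open set $V$ there is an open neighbourhood $G\ni x$ with $G\subseteq V$ and $\phi>\phi(x)-\delta$ on $G$, so that
$$
\varepsilon\log\mathbb E\left[e^{\phi(X_\varepsilon)/\varepsilon}\right]\ge \phi(x)-\delta+\varepsilon\log\mathbb P\left[X_\varepsilon\in G\right].
$$
The LDP lower bound gives $\liminf_{\varepsilon\to0}\varepsilon\log\mathbb E[e^{\phi(X_\varepsilon)/\varepsilon}]\ge\phi(x)-\delta-\inf_{G}I\ge\phi(x)-\delta-I(x)$; letting $\delta\downarrow0$ and taking the supremum over $x\in V$ yields ``$\ge L$''.

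For the upper bound, fix $\delta>0$ and a truncation level $M$ and split
$$
\mathbb E\left[e^{\phi(X_\varepsilon)/\varepsilon}\right]\le 2\max\left(\mathbb E\left[e^{\phi(X_\varepsilon)/\varepsilon}\mathbf{1}_{\phi(X_\varepsilon)>M}\right],\ \mathbb E\left[e^{\phi(X_\varepsilon)/\varepsilon}\mathbf{1}_{\phi(X_\varepsilon)\le M}\right]\right).
$$
On $\{\phi(X_\varepsilon)>M\}$ one has $e^{\phi(X_\varepsilon)/\varepsilon}\le e^{-(\gamma-1)M/\varepsilon}e^{\gamma\phi(X_\varepsilon)/\varepsilon}$ since $\gamma>1$, so the exponential--moment hypothesis gives $\limsup_{\varepsilon\to0}\varepsilon\log\mathbb E[e^{\phi(X_\varepsilon)/\varepsilon}\mathbf{1}_{\phi(X_\varepsilon)>M}]\le-(\gamma-1)M+C$ with $C:=\limsup_{\varepsilon\to0}\varepsilon\log\mathbb E[e^{\gamma\phi(X_\varepsilon)/\varepsilon}]<\infty$, which is $\le L$ once $M$ is large. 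For the second term the integrand is supported on $V\cap\{\phi\le M\}$ and bounded there by $e^{M/\varepsilon}$ (with the convention $e^{-\infty}=0$ off $V$), so it suffices to run the classical Varadhan upper bound for the truncated integrand: cover the compact level set $\{I\le M-L\}$ (compact because $I$ is a good rate function) by finitely many open sets $G_1,\dots,G_k$, chosen using continuity of $\phi$ on $V$, lower semicontinuity of $I$ and the pointwise inequality $\phi-I\le L$ on $V$ so that $\sup_{\overline{G_j}\cap V\cap\{\phi\le M\}}\phi-\inf_{\overline{G_j}}I\le L+\delta$ for each $j$; apply the LDP upper bound on each closed set $\overline{G_j}$, sum the finitely many contributions, and control the remaining region, which lies in $\{I>M-L\}$, by exponential tightness (valid here, $I$ being good and the path space Polish), obtaining a term of order $e^{L/\varepsilon}$. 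Altogether $\limsup_{\varepsilon\to0}\varepsilon\log\mathbb E[e^{\phi(X_\varepsilon)/\varepsilon}]\le L+\delta$, and $\delta\downarrow0$ finishes the proof.

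The step I expect to be the main obstacle is the bulk estimate near the boundary $\partial V$. There $\phi$ is discontinuous --- equal to $-\infty$ on $V^{c}$ and continuous but a priori unbounded on $V$ --- so the naive bound ``integrand $\le e^{M/\varepsilon}$, probability controlled by $I$'' can be far too weak when $I$ happens to be small near such a point. What makes the argument go through is that the hypotheses force $\phi$ to be large only where $I$ is correspondingly large (via $\phi\le I+L$ together with the moment bound), and that in the cases of interest, where $\phi$ is a logarithmic payoff, one has $\phi\to-\infty$ as one approaches $\partial V$ from inside $V$, so that $\phi\wedge M$ is genuinely upper semicontinuous and the classical Varadhan upper bound applies to it directly. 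Carrying out this localization rigorously in the stated generality and dovetailing it cleanly with the moment--controlled tail is the delicate point; the rest is the standard Varadhan machinery, following \cite{guasoni2008optimal}.
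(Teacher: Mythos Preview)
The paper does not give a proof of this lemma; it simply quotes it as an extension of Varadhan's lemma due to Guasoni and Robertson \cite{guasoni2008optimal}, so there is no argument in the paper to compare your sketch against. Your outline is the standard Varadhan machinery adapted to allow $\phi=-\infty$, and the lower bound as well as the truncation of the tail $\{\phi>M\}$ via the $\gamma$-moment condition are correct and routine.

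Two remarks on the upper-bound portion of your sketch. First, you appeal to exponential tightness ``valid here, $I$ being good and the path space Polish,'' but the lemma is stated on an arbitrary Hausdorff space $\mathcal X$ and Polishness is nowhere assumed (nor is the $\sigma(D,M)$-topology used later in the paper obviously Polish); the usual route is to work directly with the compact level sets of the good rate function $I$ together with the moment bound, rather than invoking exponential tightness. Second---and this is the point you yourself flag---the hypotheses do \emph{not} say that $\phi\to-\infty$ as one approaches $\partial V$ from inside, so $\phi\wedge M$ need not be upper semicontinuous on $\mathcal X$, and your covering argument as written does not control points of $\partial V$ where $I$ is small while $\phi$ stays near $M$ on the $V$-side. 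Your concluding appeal to \cite{guasoni2008optimal} is therefore what actually closes the gap; since that is precisely what the paper does, this is acceptable, but you should be clear that your own sketch does not resolve it in the stated generality.
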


%

We shall next recall the pathwise large deviation principle for L\'evy processes, but first, following \cite{leonard2000large}, we need to introduce specific topological spaces well suited for this application, and recall some preliminary results on L\'evy processes.

\paragraph{Spaces and topologies}
As usual, $D$ denotes the space of right-continuous functions with left limits (RCLL) on the interval $[0,T]$. The subspace of $D$ containing all functions on $[0,T]$ with bounded variation will be denoted by $V_r$. The symbol $M$ will denote the class of bounded $\mathbb R^n$-valued
measures on $[0,T]$. Note that there is a one-to-one correspondence between elements of $V_r$ and those of $M$: in particular, for every $\mu\in M$, the function $t\mapsto \mu([0,t])$ belongs to $V_r$. Let $\sigma(D,M)$ denote the topology on $D$ defined by 
$$
\lim_n y_n = y\ \Leftrightarrow\ \forall \mu \in M,\ \lim_n \int_{[0,T]} y_n d\mu = \int_{[0,T]} y d\mu.
$$
This topology is stronger than the topology of pointwise
  convergence but weaker than the uniform topology. 

For future reference, we let $V^{ac}_r$ denote the subspace of $V_r$ consisting of absolutely continuous functions $x$ such that $x_0=0$ and $\int_0^T |\dot x_s| ds<\infty$, equipped with the norm $\|x\| = \int_0^T |\dot x_s| ds$. 

\paragraph{Preliminaries on L\'evy processes}
Recall that the law of an $\mathbb R^n$-valued L\'evy process $X$ is characterized by its {L\'evy triplet}
{$(A,\nu,\gamma)$} via the L\'evy-Khintchine formula
$$
\mathbb E[e^{iuX_t}] = \exp t \left(i\langle u, {\gamma}\rangle - \frac{\langle{A} u,u\rangle}{2}
  + \int_{\mathbb R^n} (e^{i\langle u,x\rangle}-1-i\langle u,x\rangle 1_{|x|\leq 1}) {\nu}(dx)\right)
$$ 
Here, the L\'evy measure {$\nu$} is a positive measure on $\mathbb R^n$, which satisfies $$\int (|x|^2 \wedge 1) \nu(dx) <\infty$$ and governs the intensity of
jumps, the matrix $A$ is a positive definite $n\times n$ matrix, which corresponds to the covariance matrix of the diffusion component and the vector $\mu\in \mathbb R^n$ is related to the deterministic linear component of $X$. 
We shall need the following lemma, which is a direct consequence of Propositions 2.2 and B.1 in \cite{leonard2000large}. 
\begin{lemma}\label{Gfunc.lm}
Let $\theta \in M$ and let $X$ be a L\'evy process. Then,
$$
\log \mathbb E\left[e^{\int_{[0,T] } X_t \cdot \theta(dt)}\right] = \int_0^T G(\theta([t,T])) dt \ \in [0,\infty],
$$
where
$$
G(\lambda) = \log \mathbb E\left[ e^{\langle\lambda, X_1\rangle}\right].
$$

\end{lemma}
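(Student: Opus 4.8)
The plan is to prove the identity first for purely atomic $\theta$, where it is an immediate consequence of the stationary independent increments of $X$, and then to recover a general $\theta\in M$ by approximation. Note first that the left-hand side always makes sense: a.s. the path $t\mapsto X_t$ is RCLL, hence bounded on $[0,T]$, so $\int_{[0,T]}X_t\cdot\theta(dt)$ is a well-defined pathwise integral against the finite signed measure $\theta$, and after taking the exponential and the expectation we obtain a number in $(-\infty,+\infty]$.

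\emph{The atomic case.} Let $\theta=\sum_{i=1}^k\lambda_i\,\delta_{t_i}$ with $0\le t_1<\dots<t_k\le T$, put $t_0:=0$ and $\Delta_j:=X_{t_j}-X_{t_{j-1}}$. Using $X_{t_i}=\sum_{j\le i}\Delta_j$ and interchanging the order of summation,
\[
\int_{[0,T]}X_t\cdot\theta(dt)=\sum_{i=1}^k\langle\lambda_i,X_{t_i}\rangle
=\sum_{j=1}^k\Big\langle\,\sum_{i\ge j}\lambda_i,\ \Delta_j\Big\rangle
=\sum_{j=1}^k\langle\mu_j,\Delta_j\rangle,\qquad \mu_j:=\theta([t_j,T]).
\]
Since $X$ has independent increments the $\Delta_j$ are independent, and by stationarity $\mathbb E\big[e^{\langle\mu,X_s\rangle}\big]=\mathbb E\big[e^{\langle\mu,X_1\rangle}\big]^{\,s}=e^{sG(\mu)}$ for every $s\ge0$ (with $e^{s\cdot(+\infty)}=+\infty$, i.e. the L\'evy--Khintchine formula at real argument). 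Hence
\[
\log\mathbb E\Big[e^{\int_{[0,T]}X_t\cdot\theta(dt)}\Big]=\sum_{j=1}^k\log\mathbb E\big[e^{\langle\mu_j,\Delta_j\rangle}\big]=\sum_{j=1}^k(t_j-t_{j-1})\,G(\mu_j).
\]
Now $t\mapsto\theta([t,T])$ is exactly the step function equal to $\mu_j$ on $(t_{j-1},t_j]$ and to $0$ on $(t_k,T]$, and $G(0)=0$, so the last display equals $\int_0^T G(\theta([t,T]))\,dt$, proving the lemma in this case (and showing the value is finite precisely when every $\mu_j\in\{G<\infty\}$).

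\emph{General $\theta$ by approximation.} For arbitrary $\theta\in M$, let $\{t^n_j\}_j$ be the level-$n$ dyadic partition of $[0,T]$ and set $\theta^{(n)}:=\sum_j\theta\big((t^n_{j-1},t^n_j]\big)\,\delta_{t^n_j}$ (with the first interval enlarged to contain $0$). On the probabilistic side, $\int_{[0,T]}X_t\cdot\theta^{(n)}(dt)=\int_{[0,T]}X_{\lceil t\rceil_n}\,\theta(dt)$ with $\lceil t\rceil_n\downarrow t$, so right-continuity of $X$ and boundedness of its path on $[0,T]$ give, by dominated convergence against $|\theta|$, that $\int X_t\,\theta^{(n)}(dt)\to\int X_t\,\theta(dt)$ almost surely. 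On the deterministic side, $h^{(n)}(t):=\theta^{(n)}([t,T])\to\theta([t,T])=:h(t)$ for every $t$ (the difference is $\theta$ of an interval shrinking to $\emptyset$), while $|h^{(n)}(t)|\le|\theta|([0,T])$ uniformly; since $G$ is convex, hence lower semicontinuous and bounded below by an affine function on the fixed ball containing all the values $h^{(n)}(t)$, Fatou's lemma yields $\liminf_n\int_0^TG(h^{(n)}(t))\,dt\ge\int_0^TG(h(t))\,dt$. Combined with the identity already established for each $\theta^{(n)}$, this gives one inequality; the reverse inequality, and hence the equality, follows from the same scheme once the limit of the left-hand side is identified.

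\emph{Main obstacle.} The delicate step is precisely this passage to the limit on the left-hand side, i.e. upgrading the a.s. convergence of the exponents $\int X_t\,\theta^{(n)}(dt)$ to convergence of the Laplace transforms $\mathbb E[e^{\int X_t\theta^{(n)}(dt)}]$. When $\int_0^TG(h(t))\,dt<\infty$ one needs a uniform exponential-moment bound on $\{e^{\int X_t\theta^{(n)}(dt)}\}_n$ (available because the values $\theta^{(n)}([t,T])$ then stay in a region where $G$ is locally bounded) to obtain uniform integrability; when $\int_0^TG(h(t))\,dt=+\infty$ one argues instead, again via Fatou, that the left-hand side is also $+\infty$, so the two sides diverge together. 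Making these estimates precise, together with the behaviour of $G$ near the boundary of its effective domain, is exactly the content of Propositions 2.2 and B.1 of \cite{leonard2000large}; I would either invoke those results directly or reproduce their exponential-tightness estimates to close the argument.
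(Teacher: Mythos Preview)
The paper offers no independent proof of this lemma, merely recording it as a direct consequence of Propositions~2.2 and~B.1 in \cite{leonard2000large}. Your proposal is therefore essentially the same approach: you add a useful explicit treatment of the atomic case and a sketch of the dyadic approximation, but you correctly identify the delicate passage to the limit on the Laplace-transform side and, like the paper, ultimately defer to L\'eonard's results to close it.
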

In the sequel, we shall make use of the following two assumptions on the L\'evy process $X$. 
\begin{itemize}
\item[(A1)] There exists $\lambda_0>0$ with $\int_{|x|>1}e^{\lambda_0 |x|} \nu(dx)<\infty$.
\item[(A2)] The function $G$ is lower semicontinuous and its effective domain is bounded. 
\end{itemize}

\paragraph{A large deviations principle for L\'evy processes}
In the following, we let $X$ denote a $\mathbb R^n$-valued L\'evy process on $[0,T]$ with L\'evy measure $\nu$ and without diffusion part $(A=0)$.
We introduce a family of L\'evy processes $(X^\varepsilon)_{\varepsilon >0}$ defined by {$X^\varepsilon_t = \varepsilon X_{t/\varepsilon}$}. The following theorem can be found in \cite{leonard2000large}. 
\begin{theorem}\label{leonard.thm}
Suppose that Assumption (A1) holds true. Then the family $\{X^\varepsilon\}$ satisfies the LDP in $D$ for the $\sigma(D,M)$-topology with the good rate function $\bar J(y)$ where
\begin{equation*}
\bar J(x) =	
\left\{ 
\begin{aligned}
&\sup_{\mu\in M}\left\{\int_{[0,T]} x_t \mu(dt) - \int_0^T
  G(\mu([t,T])) dt\right\} &\text{if } x \in V_r \\
&+\infty &\text{otherwise.}
\end{aligned}
\right.
\end{equation*}
\end{theorem}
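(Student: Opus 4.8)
The plan is to derive the statement from the abstract G\"artner--Ellis / Baldi theorem on the locally convex topological vector space $(D,\sigma(D,M))$, whose topological dual is exactly $M$ (this is the main reason for introducing this topology, since $\langle D,M\rangle$ is a separating dual pair: $M$ contains all point masses and $D$ all indicators $1_{[0,t]}$). Concretely I would proceed in three steps: (i) compute the normalised limiting cumulant generating functional $\Lambda(\mu):=\lim_{\varepsilon\to0}\varepsilon\log\mathbb E[e^{\langle X^\varepsilon,\mu\rangle/\varepsilon}]$ for $\mu\in M$, where $\langle y,\mu\rangle:=\int_{[0,T]}y_t\,\mu(dt)$; (ii) verify that $\Lambda$ is finite, convex and G\^ateaux differentiable on a neighbourhood of $0$ in $M$; and (iii) establish exponential tightness of $\{X^\varepsilon\}$ for the $\sigma(D,M)$-topology. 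Granting (i)--(iii), the abstract theorem yields the full LDP with good rate function equal to the Legendre--Fenchel transform $\Lambda^*$, and it then remains to identify $\Lambda^*$ with $\bar J$.

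Step (i) is where the self-similar scaling $X^\varepsilon_t=\varepsilon X_{t/\varepsilon}$ does the work. Since $\langle X^\varepsilon,\mu\rangle/\varepsilon=\int_{[0,T]}X_{t/\varepsilon}\,\mu(dt)$, a stochastic Fubini argument gives $\int_{[0,T]}X_{t/\varepsilon}\,\mu(dt)=\int_{(0,T/\varepsilon]}\mu([\varepsilon s,T])\,dX_s$, so by the exponential formula for L\'evy processes underlying Lemma \ref{Gfunc.lm}, together with the substitution $t=\varepsilon s$,
\[
\varepsilon\log\mathbb E\!\left[e^{\langle X^\varepsilon,\mu\rangle/\varepsilon}\right]=\int_0^T G(\mu([t,T]))\,dt\qquad\text{for every }\varepsilon>0 ,
\]
so that $\Lambda(\mu)=\int_0^T G(\mu([t,T]))\,dt$ with no limit actually to take. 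Assumption (A1) guarantees $G(\lambda)<\infty$ for $|\lambda|\le\lambda_0$, with $G(0)=0$ and $\nabla G(0)=\mathbb E[X_1]$; hence $\Lambda$ is finite whenever $\|\mu\|<\lambda_0$ and is convex and $C^\infty$ there (log-moment generating functions are smooth on the interior of their domain), which is exactly the regularity the abstract theorem requires.

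With (ii) granted, the Legendre--Fenchel transform over $M$ is, by definition, $\Lambda^*(x)=\sup_{\mu\in M}\{\langle x,\mu\rangle-\int_0^T G(\mu([t,T]))\,dt\}$, which is the formula defining $\bar J$ on $V_r$. To obtain the case distinction one shows $\Lambda^*(x)=+\infty$ for $x\notin V_r$: testing against $\mu=c\,\delta_0$ forces $x_0=0$ (the atom at $0$ contributes nothing to the Lebesgue integral $\int_0^T G(\mu([t,T]))\,dt$, while $\langle x,c\delta_0\rangle=c\,x_0$), and testing against small measures supported near points where $x$ oscillates, using $G(\lambda)=\langle\mathbb E[X_1],\lambda\rangle+o(|\lambda|)$ near $0$, forces $x$ to be of bounded variation. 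Convexity and lower semicontinuity of $\bar J$ are automatic for a convex conjugate, and goodness (compactness of level sets) follows from exponential tightness via the general principle that exponential tightness renders the rate function of a weak LDP good.

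The main obstacle is step (iii). One must produce, for each $L>0$, a $\sigma(D,M)$-compact set $K_L\subset D$ with $\limsup_{\varepsilon\to0}\varepsilon\log\mathbb P[X^\varepsilon\notin K_L]\le-L$. The natural candidates are intersections of sublevel sets of the total variation with suitable equicontinuity-type constraints; such sets are compact for the weak topology $\sigma(D,M)$ by a Banach--Alaoglu / Helly-selection argument, and this is precisely where working with $\sigma(D,M)$ rather than the uniform topology pays off, since compactness becomes cheap. For the probability bound one combines the exponential Chebyshev inequality $\mathbb P[\langle X^\varepsilon,\mu\rangle\ge a]\le\exp\{\varepsilon^{-1}(\int_0^T G(\mu([t,T]))\,dt-a)\}$ --- valid by the exact identity of step (i) --- with the exponential integrability furnished by (A1) and a maximal inequality to pass from control of finitely many linear functionals to control of the whole path. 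I expect the bookkeeping in this tightness estimate, and the verification that the chosen sets are genuinely $\sigma(D,M)$-compact, to be the only genuinely delicate points; the remainder is a routine application of convex duality and the L\'evy--Khintchine exponential formula.
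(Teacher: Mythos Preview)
The paper does not prove this theorem; it is quoted from L\'eonard \cite{leonard2000large}, whose argument proceeds differently, via large deviations for the underlying Poisson random measure and an identification of the rate function afterwards, rather than by an abstract G\"artner--Ellis theorem on $(D,\sigma(D,M))$. Your route is a natural alternative, and step~(i) is correct: the identity $\varepsilon\log\mathbb E[e^{\langle X^\varepsilon,\mu\rangle/\varepsilon}]=\int_0^T G(\mu([t,T]))\,dt$ holds exactly for every $\varepsilon>0$, essentially by Lemma~\ref{Gfunc.lm}.

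The genuine gap is in step~(ii). Finiteness and G\^ateaux differentiability of $\Lambda$ on a neighbourhood of $0$ is \emph{not} ``exactly the regularity the abstract theorem requires'': the G\"artner--Ellis/Baldi lower bound needs every $x$ with $\Lambda^*(x)<\infty$ to be an exposed point of $\Lambda^*$ with exposing hyperplane in the interior of $\mathrm{dom}\,\Lambda$, which in practice means $G$ must be essentially smooth (steep at the boundary of its domain). Assumption~(A1) does not force this. For instance, the pure-jump L\'evy process with $\nu(dx)=e^{-x}x^{-5/2}\mathbf 1_{x>1}\,dx$ satisfies (A1) with $\lambda_0=1$, yet $G'(1^{-})=\int_1^\infty x^{-3/2}\,dx=2<\infty$, so $G$ is not steep; every $v>2$ lies in $\{L_a<\infty\}$ but is not an exposed point, and paths $x\in V_r$ with $\dot x_t>2$ on a set of positive measure have $\bar J(x)<\infty$ yet are not exposed for $\Lambda^*$. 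For such processes your argument yields the lower bound only near the mean path $t\mapsto t\,\mathbb E[X_1]$. L\'eonard's route avoids this obstruction because its finite-dimensional building block is Cram\'er's theorem for i.i.d.\ sums, whose lower bound comes from a direct tilting argument requiring no smoothness of $G$. To repair the G\"artner--Ellis approach under (A1) alone you would need a separate lower-bound argument---for example an explicit Esscher change of measure on path space exploiting that the cumulant functional is exact rather than merely a limit---which is a substantial additional step you have not addressed.
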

Note that  De Acosta \cite{de1994large} proves an LDP for the uniform topology under the assumption that {all exponential moments are finite}.  However, this assumption is too strong in practice, since most financial models are based on L\'evy processes with exponential tail decay. 

The rate function appearing in Theorem \ref{leonard.thm} admits a more explicit expression (see section 6 in \cite{leonard2000large}). 
Define the Fenchel conjugate of $G$:
$$
L_a(v) = \sup_{\lambda\in\mathbb R^d} \left\{ \langle\lambda, v\rangle - G(\lambda) \right\}
$$
and its recession function
$$
L_s(v) = \lim_{u\to\infty} \frac{L_a(uv)}{u}.
$$
Then, 
\begin{equation*}
\bar J(x) =	
\left\{ 
\begin{aligned}
&\int_{[0,T]} L_a(\frac{d\dot x_a}{dt}(t))dt + \int_{[0,T]} L_s(\frac{d\dot x_s}{d\mu}(t))d\mu &\text{if } x \in V_r \\
&+\infty &\text{otherwise,}
\end{aligned}
\right.
\end{equation*}
where $\dot x = \dot x_a + \dot x_s$ is the decomposition of the measure $\dot x\in M$ in absolutely continuous and singular pars with respect to $dt$ and $\mu$ in any nonnegative measure on $[0,T]$, with respect to which $\dot x_s$ in absolutely continuous.

\section{Main results}\label{main.sec}

As mentioned in the introduction, our importance sampling estimator is based on the path-dependent Esscher transform,
$$
\frac{d\mathbb P^\theta}{d\mathbb P} = \frac{e^{\int_{[0,T]} X_t
    \cdot \theta(dt)}}{\mathbb E\left[e^{\int_{[0,T]} X_t
    \cdot \theta(dt)}\right]}
$$
where $\theta$ is a (deterministic) bounded $\mathbb R^n$-valued
signed measure on $[0,T]$.

The optimal choice of $\theta$ should minimize the variance of the estimator under $\mathbb P^\theta$,
$$
\text{Var}_{\mathbb P^\theta} \left( P \frac{d\mathbb P}{d\mathbb P^\theta}\right) = \mathbb E_{\mathbb P}\left[P^2\frac{d\mathbb P}{d\mathbb P^\theta}\right] - \mathbb E\left[P\right]^2
$$
Denote $H(X) = \log P(S)$. Then, using Lemma \ref{Gfunc.lm}, the minimization problem writes
$$
\inf_{\theta \in M}\mathbb E_{\mathbb P}\left[ \exp\left\{ 2H(X) - \int_{[0,T]} X_t
    \cdot \theta(dt) + \int_0^T G(\theta([t,T]))dt \right\}\right],
$$
where 
$$
G(\theta) = \langle \theta, {\mu}\rangle +  \int_{\mathbb R^n} (e^{\langle \theta,x\rangle}-1-\langle \theta,x\rangle 1_{|x|\leq 1}) {\nu}(dx).
$$

Given the possibly complex form of the log-payoff $H$, the above expression for the variance is difficult to minimize. Our approach is instead to minimize a proxy of the variance, which has a more tractable form.
Our first main result provides an expression for such a proxy, which we aim to minimize to obtain an asymptotically optimal variance reduction. 
\begin{proposition}\label{varadhanapplies}
Let Assumption (A1) hold true, and suppose that the set $\{x\in D: H(x)>-\infty\}$ is open and that $H$ is continuous on this set  for the $\sigma(D,M)$-topology and satisfies
$$
H(x) \leq A + B\sup_{s \in [0,T]} \sum_{i=1}^n |x^i_s|
$$
with $B< {\lambda_0}/{4n}$. 
Then, for every $\theta \in M$ such that $$\max_{0\leq t\leq T} |\theta([t,T])| < \lambda_0 - 4nB,$$ 
it holds that
\begin{align*}
&\lim_{\varepsilon \to 0}\varepsilon \log \mathbb E \left[
  e^{\frac{2H(X^\varepsilon) - \int_{[0,T]}  X^\varepsilon_t \cdot \theta(dt) }{\varepsilon}} \right]  = 
\sup_{x \in D} \left\{2H(x) - \int_{[0,T]}  x_t \cdot \theta(dt)  - \bar J(x ) \right\}. 
\end{align*}
\end{proposition}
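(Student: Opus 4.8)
The plan is to deduce the statement from the extended Varadhan lemma recalled above, applied to the family $\{X^\varepsilon\}$ and to the function $\phi(x):=2H(x)-\int_{[0,T]}x_t\cdot\theta(dt)$. By Theorem~\ref{leonard.thm}, Assumption (A1) ensures that $\{X^\varepsilon\}$ obeys the LDP in $D$ for the $\sigma(D,M)$-topology with the good rate function $\bar J$, so the conclusion of Varadhan's lemma is precisely $\lim_{\varepsilon\to0}\varepsilon\log\mathbb E[e^{\phi(X^\varepsilon)/\varepsilon}]=\sup_{x\in D}\{\phi(x)-\bar J(x)\}$, which is the claimed identity. Hence it is enough to verify the three hypotheses of that lemma: that $\phi$ is $[-\infty,\infty[$-valued, that $\{\phi>-\infty\}$ is open with $\phi$ continuous on it, and that $\limsup_{\varepsilon\to0}\varepsilon\log\mathbb E[e^{\gamma\phi(X^\varepsilon)/\varepsilon}]<\infty$ for some $\gamma>1$.

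The first two are immediate. The map $x\mapsto\int_{[0,T]}x_t\cdot\theta(dt)=\sum_{i=1}^n\int_{[0,T]}x^i_t\,\theta^i(dt)$ is finite-valued and, by the very definition of the $\sigma(D,M)$-topology (since $\theta^i\in M$), continuous on all of $D$; and since functions in $D$ are bounded on $[0,T]$, the growth bound on $H$ shows $\phi(x)<\infty$ for all $x\in D$. Therefore $\{\phi>-\infty\}=\{H>-\infty\}$, which is open by assumption, and on this set $\phi$ is the sum of the continuous function $H$ and a continuous linear functional, hence continuous.

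The heart of the matter is the exponential moment bound, and this is where the constants $B<\lambda_0/4n$ and $\max_t|\theta([t,T])|<\lambda_0-4nB$ are used; I expect this to be the main obstacle. I would first apply H\"older's inequality, $\mathbb E[e^{\gamma\phi(X^\varepsilon)/\varepsilon}]\le\mathbb E[e^{2\gamma pH(X^\varepsilon)/\varepsilon}]^{1/p}\,\mathbb E[e^{-\frac{\gamma q}{\varepsilon}\int_{[0,T]}X^\varepsilon_t\cdot\theta(dt)}]^{1/q}$ with conjugate $p,q>1$. For the second factor, the change of variables $u=t/\varepsilon$ rewrites $\int_{[0,T]}X^\varepsilon_t\cdot\theta(dt)$ as the integral of $X$ over $[0,T/\varepsilon]$ against a rescaled measure, and Lemma~\ref{Gfunc.lm} then yields $\mathbb E[e^{-\frac{\gamma q}{\varepsilon}\int_{[0,T]}X^\varepsilon_t\cdot\theta(dt)}]=\exp\bigl(\tfrac1\varepsilon\int_0^T G(-\gamma q\,\theta([t,T]))\,dt\bigr)$, which is $e^{O(1/\varepsilon)}$ provided $\gamma q\max_t|\theta([t,T])|<\lambda_0$, so that the argument of $G$ stays in the region where $G$ is finite --- a consequence of (A1). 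For the first factor, the growth bound gives $2H(X^\varepsilon)\le 2A+2B\sum_i\sup_{s\le T}|X^{\varepsilon,i}_s|=2A+2B\varepsilon\sum_i\sup_{u\le T/\varepsilon}|X^i_u|$, and a further H\"older splitting over the $n$ coordinates reduces matters to bounding $\mathbb E[e^{2\gamma pnB\sup_{u\le T/\varepsilon}|X^i_u|}]$. The technical lemma of the Appendix (a Doob-type maximal inequality for the exponential moment of the running supremum of a one-dimensional L\'evy process) bounds this by $Ke^{(T/\varepsilon)\Lambda}$ with $\Lambda<\infty$, at the price of a further H\"older exponent $p'>1$ and the requirement $2\gamma pp'nB<\lambda_0$. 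Collecting the constraints, it remains to choose conjugate $p,q>1$, $\gamma>1$ and $p'>1$ with $2\gamma pp'nB<\lambda_0$ and $\gamma q\max_t|\theta([t,T])|<\lambda_0$; letting $\gamma,p'\downarrow1$ this requires $2pnB<\lambda_0$ and $\tfrac{p}{p-1}\max_t|\theta([t,T])|<\lambda_0$, i.e.\ $p$ in the interval $\bigl(\tfrac{\lambda_0}{\lambda_0-\max_t|\theta([t,T])|},\tfrac{\lambda_0}{2nB}\bigr)$, which is nonempty exactly when $\max_t|\theta([t,T])|<\lambda_0-2nB$; the stronger hypotheses $B<\lambda_0/4n$ and $\max_t|\theta([t,T])|<\lambda_0-4nB$ leave the extra room needed to also accommodate $\gamma>1$ and $p'>1$. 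Once this is in place, Varadhan's lemma applies verbatim and gives the asserted limit.
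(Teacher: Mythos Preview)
Your overall plan matches the paper's exactly: invoke Theorem~\ref{leonard.thm} for the LDP, note that continuity of $\phi$ on $\{H>-\infty\}$ is immediate from the definition of $\sigma(D,M)$, and check the moment condition of Varadhan's lemma by a H\"older split into the $H$-part and the $\theta$-part, the latter being computed exactly via Lemma~\ref{Gfunc.lm}. The gap is in the $H$-part, where you mischaracterize Lemma~\ref{wienerhopf.lm} and consequently miss a key step and a factor of~$2$.

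Lemma~\ref{wienerhopf.lm} is \emph{not} a Doob-type bound of the form $\mathbb E[e^{\beta\sup_{s\le t}X_s}]\le K e^{t\Lambda}$. It is a Wiener--Hopf estimate on the \emph{infinite-horizon} supremum, $\mathbb E[e^{\beta\overline X_\infty}]<\infty$, and it requires the drift condition $\mathbb E[X_t e^{\beta' X_t}]<0$. The paper therefore first subtracts a linear drift: it chooses $b>0$ so that both $\pm X^i_t-bt$ satisfy this condition, writes $\sup_{s\le T}|X^i_{s/\varepsilon}|\le \sup_{s}(X^i_{s/\varepsilon}-bs/\varepsilon)+\sup_{s}(-X^i_{s/\varepsilon}-bs/\varepsilon)+2bT/\varepsilon$, bounds each running supremum by the corresponding supremum over $s\ge0$, and only then applies Lemma~\ref{wienerhopf.lm}. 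The resulting expectations are finite constants independent of~$\varepsilon$, so $\varepsilon\log(\cdot)\to0$; no $e^{(T/\varepsilon)\Lambda}$ growth ever appears. Without the drift subtraction the lemma simply does not apply, and your proposal omits this step entirely.

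A consequence is that your constant count is off. Splitting $|X^i|$ into its two one-sided suprema and separating the resulting $2n$ terms by Cauchy--Schwarz is what produces the factor~$4$ in the paper's exponent $4B\gamma pn$ and hence in the hypothesis $4nB<\lambda_0$; your accounting with only $n$ H\"older factors and an auxiliary $p'$ gives $2\gamma pp'nB$, which leads you to the (too optimistic) conclusion that $\max_t|\theta([t,T])|<\lambda_0-2nB$ would already suffice. The Doob-type route you seem to have in mind---apply Doob's $L^{p'}$ inequality to the exponential martingale of $X^i$ and of $-X^i$---can indeed be made to work and is a legitimate alternative to the paper's Wiener--Hopf argument, but it too needs the positive/negative split (hence the same factor~$2$) and is not what the Appendix lemma supplies.
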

\begin{proof}
Since the pay-off $H$ is assumed to be continuous, the continuity of the mapping
$$
x\mapsto 2H(x) - \int_{[0,T]}  x_t \cdot \theta(dt)
$$
for the $\sigma(D,M)$-topology follows from the definition of this topology. It remains to check the integrability condition of Varadhan's lemma.
By assumptions of the Proposition, we may choose $p>1$ and $q>1$ with $\frac{1}{p} + \frac{1}{q}=1$, as well as $\gamma>1$, such that 
 \begin{align*}
&q\max_{0\leq t\leq T}|\theta([t,T])| <\lambda_0 \quad \text{and}\quad 4 pn B <\lambda_0. 
\end{align*} 
Moreover, there exists $b>0$ with 
$$
\mathbb E\left[ (X^i_{t}-bt)e^{4B\gamma pn (X^i_{t}-bt)}\right]<0\quad \text{and}\quad \mathbb E\left[ (-X^i_{t}-bt)e^{4B\gamma pn (-X^i_{t}-bt)}\right]<0
$$
for all $t>0$. 
Then, by the assumption on $H$, the Cauchy-Schwarz inequality, and Lemma \ref{wienerhopf.lm}, the following estimates hold true:
\begin{align*}
&\lim\sup_{\varepsilon\to 0} \varepsilon \log \mathbb E \left[e^{\frac{\gamma(2H(X^\varepsilon) - \int_{[0,T]}  X^\varepsilon_t \cdot \theta(dt) ) }{\varepsilon}}\right] \\
& \leq 2A\gamma + \lim\sup_{\varepsilon\to 0} \varepsilon \log \mathbb E \left[e^{\frac{\gamma(2B\sup_{s\in [0,T]} \sum_{i=1}^n|X^{\varepsilon,i}_s| - \int_{[0,T]}  X^\varepsilon_t \cdot \theta(dt) ) }{\varepsilon}}\right]\\
& = 2A\gamma +\lim\sup_{\varepsilon\to 0} \varepsilon \log \mathbb E \left[e^{{\gamma(2B\sup_{s\in [0,T]}\sum_{i=1}^n |X^i_{s/\varepsilon}| - \int_{[0,T]}  X_{t/\varepsilon} \cdot \theta(dt) ) }}\right]\\
&\leq 2A\gamma + 2bnT+ \sum_{i=1}^n \lim\sup_{\varepsilon\to 0} \varepsilon \log \mathbb E\left[e^{4B\gamma pn\sup_{s\in [0,T]} (X^i_{s/\varepsilon}-bs/\varepsilon)}\right] \\&+\sum_{i=1}^n\lim\sup_{\varepsilon\to 0} \varepsilon \log \mathbb E\left[e^{4B\gamma pn\sup_{s\in [0,T]} (-X^i_{s/\varepsilon}-bs/\varepsilon)}\right]\\& + \lim\sup_{\varepsilon\to 0} \varepsilon \log \mathbb E \left[e^{- q\gamma\int_{[0,T]}  X_{t/\varepsilon} \cdot \theta(dt)  }\right]\\
& \leq  2A\gamma + 2bnT+ \sum_{i=1}^n \lim\sup_{\varepsilon\to 0} \varepsilon \log \mathbb E\left[e^{4B\gamma pn\sup_{s\geq 0} (X^i_{s}-bs)}\right] \\
& + \sum_{i=1}^n\lim\sup_{\varepsilon\to 0} \varepsilon \log \mathbb E\left[e^{4B\gamma pn\sup_{s\geq 0} (-X^i_{s}-bs)}\right] + \int_{[0,T]} G(-q\gamma\theta([t,T]))< \infty. 
\end{align*}

\end{proof}

The result of Proposition \ref{varadhanapplies} leads us to introduce the following definition. 
\begin{definition}\label{optvar.def}
We say that the variance reduction parameter $\theta^*$ is
asymptotically optimal if it minimizes 
$$
\sup_{x \in V_r} \left\{2H(x) - \int_{[0,T]}  x_t \cdot \theta(dt)  + \int_{[0,T]} G(\theta([t,T]))dt- \bar J(x ) \right\}
$$
over $\theta \in M$. 
\end{definition}


%

The optimization functional in the Definition \ref{optvar.def} is difficult to compute in practice, since the rate function $\bar J$ is usually not known explicitly. The following theorem shows that for concave log-payoffs, the computation of the optimal parameter $\theta^*$ is greatly simplified. European basket put options and many paht-dependent put-like payoffs encountered in practice are indeed concave.

\begin{theorem}\label{maindual.thm}
Let $H$ be concave and upper semicontinuous on $V^{ac}_r$ and assume that for every $x\in V_r$ there is a sequence $\{x_n\} \subset V^{ac}_r$ converging to $x$ in the $\sigma(D,M)$-topology and such that $H(x_n) \to H(x)$. Let Assumption (A2) be satisfied.  
Then,
\begin{multline}
\inf_{\theta \in M}\sup_{x \in V_r} \left\{2H(x) - \int_{[0,T]}  x_t \cdot \theta(dt) + \int_{[0,T]} G(\theta([t,T]))dt - \bar J(x ) \right\}  \\= 2\inf_{\theta \in M}\{\widehat H(\theta) +
\int_{[0,T]} G(\theta([t,T]))dt\}\label{dualtheta}
\end{multline}
where 
$$
\widehat H(\theta) = \sup_{x \in V_r}\{H(x) - \int_{[0,T]} x_t \theta (dt)\}.
$$
Moreover, if the infimum in the left-hand side of \eqref{dualtheta} is attained by $\theta^*$ then the same value $\theta^*$ attains the infimum in the right-hand side of \eqref{dualtheta}. 
\end{theorem}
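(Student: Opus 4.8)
The plan is to recognise the rate function $\bar J$, restricted to $V_r$ where it is finite, as the convex conjugate of the functional $\Lambda(\mu):=\int_0^T G(\mu([t,T]))\,dt$, $\mu\in M$, and then to collapse the nested $\inf_\theta\sup_x$ of \eqref{dualtheta} into an ordinary infimum by a minimax interchange. Indeed, Theorem~\ref{leonard.thm} says that $\bar J(x)=\Lambda^\ast(x):=\sup_{\mu\in M}\{\int_{[0,T]}x_t\cdot\mu(dt)-\Lambda(\mu)\}$ for $x\in V_r$, and $\bar J\equiv+\infty$ otherwise. The functional $\Lambda$ is convex, since $G$ is a cumulant generating function, hence convex, composed with the linear map $\mu\mapsto(\mu([t,T]))_{t}$ and integrated; by Assumption~(A2) it is also lower semicontinuous with bounded effective domain, so $\{\Lambda<\infty\}$ lies in a weak-$\ast$ compact convex subset of $M$. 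The function $\widehat H$, being a supremum of affine functions of $\theta$, is convex and lower semicontinuous.

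Now fix $\theta\in M$ and put $\Psi(x,\theta):=2H(x)-\int_{[0,T]}x_t\cdot\theta(dt)+\Lambda(\theta)-\bar J(x)$, which is concave in $x$ (as $H$ is concave and $\bar J$ convex) and convex in $\theta$. Using the conjugate representation of $\bar J$, for $x\in V_r$ we get $\inf_{\theta\in M}\Psi(x,\theta)=2H(x)-\bar J(x)-\sup_{\theta\in M}\{\int_{[0,T]}x_t\cdot\theta(dt)-\Lambda(\theta)\}=2H(x)-2\bar J(x)$. Granting the interchange of $\inf_\theta$ and $\sup_x$ (which I return to below), the left-hand side of \eqref{dualtheta} equals $\sup_{x\in V_r}\inf_{\theta\in M}\Psi(x,\theta)=2\sup_{x\in V_r}\{H(x)-\bar J(x)\}$. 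Inserting the conjugate representation of $\bar J$ once more and interchanging again, $\sup_{x\in V_r}\{H(x)-\bar J(x)\}=\sup_{x\in V_r}\inf_{\mu\in M}\{H(x)-\int_{[0,T]}x_t\cdot\mu(dt)+\Lambda(\mu)\}=\inf_{\mu\in M}\{\widehat H(\mu)+\Lambda(\mu)\}$, which is the right-hand side of \eqref{dualtheta}. (Equivalently, one may compute, for fixed $\theta$, the inner supremum as $\Lambda(\theta)+\inf_{\mu\in M}\{\Lambda(\mu)+2\widehat H((\theta+\mu)/2)\}$, so that the left-hand side equals $\inf_{\theta,\mu\in M}\{\Lambda(\theta)+\Lambda(\mu)+2\widehat H((\theta+\mu)/2)\}$, and then collapse this onto the diagonal $\theta=\mu$ using $\Lambda(\theta)+\Lambda(\mu)\ge2\Lambda((\theta+\mu)/2)$.)

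I expect the minimax interchanges to be the main obstacle, and this is where all the hypotheses are used. Since $H$ is only assumed concave and upper semicontinuous on $V^{ac}_r$, I would first invoke the density hypothesis --- every $x\in V_r$ is a $\sigma(D,M)$-limit of $x_n\in V^{ac}_r$ with $H(x_n)\to H(x)$ --- together with the lower semicontinuity of the good rate function $\bar J$ and the $\sigma(D,M)$-continuity of $x\mapsto\int_{[0,T]}x_t\cdot\mu(dt)$, to replace every supremum over $x\in V_r$ above by a supremum over $x\in V^{ac}_r$. On $V^{ac}_r$, a Banach space under $\|x\|=\int_0^T|\dot x_s|\,ds$, Fubini turns the pairing into the $L^1$--$L^\infty$ pairing $\int_0^T\langle\dot x_s,\theta([s,T])\rangle\,ds$, turns $\bar J$ into the integral functional $x\mapsto\int_0^T L_a(\dot x_s)\,ds$ with $L_a=G^\ast$, and turns $\Lambda$ into $\int_0^T G(\theta([s,T]))\,ds$; Sion's minimax theorem (or, equivalently, Rockafellar's conjugacy theory for integral functionals on $L^1$--$L^\infty$, since $L_a^\ast=G^{\ast\ast}=G$) then applies, the compactness it needs on the $\mu$-side being furnished precisely by Assumption~(A2). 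One also checks that $\widehat H$ is proper, so that no duality gap occurs; this uses the growth bound on $H$ inherited from Proposition~\ref{varadhanapplies}.

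For the final assertion, suppose $\theta^\ast$ attains the infimum on the left of \eqref{dualtheta}. The minimax argument then yields a saddle point $(x^\ast,\theta^\ast)$ of $\Psi$. The saddle condition in $\theta$ (that $\theta^\ast$ minimises $\theta\mapsto\Lambda(\theta)-\int_{[0,T]}x^\ast_t\cdot\theta(dt)$) forces the Fenchel--Young equality $\int_{[0,T]}x^\ast_t\cdot\theta^\ast(dt)=\bar J(x^\ast)+\Lambda(\theta^\ast)$, i.e.\ $\theta^\ast\in\partial\bar J(x^\ast)$; substituting this back, the term $\bar J(x)-\int_{[0,T]}x_t\cdot\theta^\ast(dt)+\Lambda(\theta^\ast)=\Psi(x,\theta^\ast)-2(H(x)-\bar J(x))$ vanishes at $x^\ast$, so, since $\Psi(x^\ast,\theta^\ast)=\sup_x\Psi(x,\theta^\ast)=2\sup_x\{H(x)-\bar J(x)\}$, we conclude $x^\ast$ maximises $H-\bar J$. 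Using in addition that $\bar J$ is differentiable at $x^\ast$ (a consequence of the strict convexity of $G$ when $X_1$ is non-degenerate), the first-order condition for $x^\ast$ combined with $\theta^\ast\in\partial\bar J(x^\ast)$ gives that $\theta^\ast$ is a supergradient of $H$ at $x^\ast$, hence $H(x^\ast)-\int_{[0,T]}x^\ast_t\cdot\theta^\ast(dt)=\widehat H(\theta^\ast)$. Adding the two equalities, $\widehat H(\theta^\ast)+\Lambda(\theta^\ast)=H(x^\ast)-\bar J(x^\ast)=\sup_{x\in V_r}\{H(x)-\bar J(x)\}=\inf_{\mu\in M}\{\widehat H(\mu)+\Lambda(\mu)\}$, so $\theta^\ast$ also attains the infimum on the right of \eqref{dualtheta}. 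The delicate point here is the differentiability of $\bar J$ at the maximiser and the attainment of the maximin, which is why non-degeneracy of $X_1$ and the coercivity furnished by (A2) and the growth bound on $H$ are needed.
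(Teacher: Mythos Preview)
Your high-level strategy --- recognise $\bar J$ on $V_r$ as $\Lambda^\ast$ with $\Lambda(\mu)=\int_0^T G(\mu([t,T]))\,dt$, interchange $\sup_x$ and $\inf_\mu$, then collapse the resulting double infimum over $(\theta,\mu)$ onto the diagonal by convexity of $G$ --- is exactly the paper's; your parenthetical ``equivalently'' computation is the paper's Step~4 almost verbatim.

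The genuine gap is in your reduction from $V_r$ to $V^{ac}_r$. You invoke the lower semicontinuity of $\bar J$, but that points the wrong way: to show $\sup_{V_r}\{2H(x)-\cdots-\bar J(x)\}\le\sup_{V^{ac}_r}\{\cdots\}$ you need, along the approximating sequence $x_n\to x$, that $\limsup_n(-\bar J(x_n))\ge-\bar J(x)$, i.e.\ $\liminf_n\bar J(x_n)\le\bar J(x)$, which is the \emph{opposite} of what lsc delivers. Under (A2) the recession function $L_s$ is finite (it is the support function of the bounded set $\overline{\mathrm{dom}\,G}$), so for $x\in V_r$ with nontrivial singular part $\bar J(x)$ is finite, and there is no reason the particular sequence furnished by the hypothesis on $H$ also controls $\bar J$. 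The paper avoids this problem entirely by doing the reduction \emph{before} substituting the conjugate for $\bar J$: it writes the inner quantity as $\sup_{x\in V_r}\inf_{\mu\in M}\{2H(x)-\int x_t\,\mu(dt)+\Lambda(\mu)\}$ and passes to $V^{ac}_r$ at \emph{that} level, which is legitimate because for each fixed $\mu$ the bracketed expression is $\sigma(D,M)$-continuous in $x$ (the pairing by definition of the topology, and $H$ by hypothesis). Only after the minimax interchange does it reassemble things.

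For that interchange, where you invoke Sion with compactness on the $\mu$-side from (A2), the paper instead works on $L^1\times L^\infty$ and proves the saddle property directly: it penalises by $-\varepsilon\|x\|^2$ to force a maximiser $\hat x^\varepsilon$, extracts a weak-$\ast$ cluster point of the corresponding minimisers in $L^\infty$ via Banach--Alaoglu (here (A2) bounds them uniformly), and passes to the limit using Rockafellar's weak lower semicontinuity of $y\mapsto\int G(y_t)\,dt$. Your Sion route is a legitimate alternative, but you still need that same Rockafellar semicontinuity to verify Sion's hypotheses, so it does not really shorten the argument. Note also that the growth bound on $H$ you borrow from Proposition~\ref{varadhanapplies} is not assumed in the theorem; the paper's penalisation supplies the needed coercivity without it.

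Your treatment of the ``moreover'' clause is more elaborate than the paper's (which leaves it essentially implicit in the diagonal-collapse of Step~4), but you import extra hypotheses --- differentiability of $\bar J$ at the maximiser, non-degeneracy of $X_1$ --- that are not in the statement.
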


\begin{remark}
The assumption that the effective domain of $G$ is bounded is the most restrictive. It is satisfied by models where the tail decay is exactly exponential, such as variance gamma, normal inverse gaussian, CGMY and their multidimensional versions. However, it rules out models with faster than exponential tail decay such as the celebrated Merton's model. We expect that for such models a similar result may still be shown, but one would need to use different, and slightly more complex methods (Orlicz spaces instead of $L^\infty$). To keep the length of the proof reasonable, we have chosen to present the argument in the case of a bounded domain. 
\end{remark}
\begin{proof}
Step 1. By assumption of the proposition, 
\begin{align}
&{\sup_{x\in V_r}\inf_{\mu\in M}}\{2H(x) - \int_{[0,T]} x_t\cdot \mu(dt)+ \int_{[0,T]}
  G(\mu([t,T])) dt\} \notag\\&\qquad= {\sup_{x\in V^{ac}_r}\inf_{\mu\in M}}\{2H(x) - \int_{[0,T]} x_t\cdot \mu(dt)+ \int_{[0,T]}
  G(\mu([t,T])) dt\}\notag\\
& \qquad=  {\sup_{x\in V^{ac}_r}\inf_{\mu\in M}}\{2H(x) - \int_{[0,T]} \dot x_t\cdot \mu([t,T])dt + \int_{[0,T]}
  G(\mu([t,T])) dt\}\notag\\
&  \qquad=  {\sup_{x\in V^{ac}_r}\inf_{y\in L^\infty([0,T])}}\{2H(x) - \int_{[0,T]} \dot x_t\cdot y_t dt + \int_{[0,T]}
  G(y_t) dt\},\label{step1minimax}
\end{align}
where the last equality follows by approximating an $L^{\infty}$ function with a sequence of continuous functions with bounded variations, and using the dominated convergence theorem.

Step 2. Our aim in this step is to show that $\sup$ and $\inf$ in \eqref{step1minimax} may be exchanged. We adapt the classical argument, which may be found, e.g., in \cite[paragraph 6.2]{ekeland1976convex}. Fix $\varepsilon>0$ and define, for $x\in V^{ac}_r$ and $y \in L^\infty([0,T])$, 
$$
L_\varepsilon(x,y) := 2H(x) - \varepsilon \|x\|^2 - \int_{[0,T]} \dot x_t \cdot y_t dt+ \int_{[0,T]}
  G(y_t) dt, 
$$
and $L(x,y):= L_0(x,y)$. 
For $x\in V^{ac}_r$, let 
$$
f_\varepsilon(x) = \inf_{y \in L^\infty([0,T])}L_\varepsilon(x,y). 
$$
The $\inf$ is attained by $\hat y(x)$ such that for each $t$, $\hat y_t(x)$
is the minimizer of $y\mapsto - \dot x_t \cdot y+  G(y)$. The function $f_\varepsilon$ is concave, upper semicontinuous and satisfies $f_\varepsilon(x)\to -\infty$ whenever $\|x\|\to \infty$.  Therefore, it attains its upper bound at the point $\hat x^\varepsilon$. Let $x\in V^{ac}_r$ and $\lambda\in(0,1)$. By the optimality of $\hat x^\varepsilon$ and $\hat y(\hat x^\varepsilon)$ and the concavity of $L_\varepsilon$ with respect to the first argument,  
\begin{align*}
f_\varepsilon(\hat x^\varepsilon) &\geq f_\varepsilon((1-\lambda)\hat x^\varepsilon + \lambda x)  = L_\varepsilon((1-\lambda)\hat x^\varepsilon + \lambda x,\hat y((1-\lambda)\hat x^\varepsilon + \lambda x))\\  &\geq (1-\lambda)L_\varepsilon(\hat x^\varepsilon,\hat y((1-\lambda)\hat x^\varepsilon + \lambda x)) + \lambda L_\varepsilon(x,\hat y((1-\lambda)\hat x^\varepsilon + \lambda x))
\\  &\geq (1-\lambda)L_\varepsilon(\hat x^\varepsilon,\hat y(\hat x^\varepsilon)) + \lambda L_\varepsilon(x,\hat y((1-\lambda)\hat x^\varepsilon + \lambda x)),
\end{align*}
which implies that 
$$
f_\varepsilon(\hat x^\varepsilon) \geq L_\varepsilon(x,\hat y((1-\lambda)\hat x^\varepsilon + \lambda x)). 
$$
Since the effective domain of $G$ is bounded, the function
$$
x\mapsto \sup_y\{x\cdot y - G(y)\}
$$
is Lipschitz continuous, and therefore, as $\lambda \to 0$, 
$$
L_\varepsilon(x,\hat y((1-\lambda)\hat x^\varepsilon + \lambda x)) \to L_\varepsilon(x,\hat y(\hat x^\varepsilon)),
$$
which implies that for  all $x\in V^{ac}_r$,
$$
f_\varepsilon(\hat x^\varepsilon) \geq L_\varepsilon(x,\hat y(\hat x^\varepsilon)),
$$
and therefore, for all $x\in V^{ac}_r$ and $y\in L^\infty([0,T])$,
\begin{align}
L_\varepsilon(x,\hat y(\hat x^\varepsilon)) \leq L_\varepsilon(\hat x^\varepsilon,\hat y(\hat x^\varepsilon)) \leq L_\varepsilon(\hat x^\varepsilon,y).\label{saddlepoint}
\end{align}
Since the family $\hat y(\hat x^\varepsilon)$ is bounded in $L^\infty$, by Banach-Alaoglu theorem there exists a sequence $\{\varepsilon_j\} $ and a point $\bar y$ such that $\hat y(\hat x^{\varepsilon_j})\to \bar y$ in the weak$^*$ topology of $L^\infty$ and hence in the weak topology of $L^p$ for all $p>1$.  Since
$$
y\mapsto \int_{[0,T]} G(y_t) dt
$$
is convex and lower semicontinuous in the strong topology of $L^p$ (see \cite{rockafellar68}), it is also weakly lower semicontinuous and from \eqref{saddlepoint} it follows that for $x\in V^{ac}_r$, 
\begin{align*}
L(x,\bar y )&\leq \liminf_j L(x,\hat y(\hat x^{\varepsilon_j})) \leq \liminf_j L_{\varepsilon_j}(\hat x^{\varepsilon_j},\hat y(\hat x^{\varepsilon_j}))  \\ &\leq \liminf_j\sup_{x\in V^{ac}_r} \inf_{y\in L^\infty([0,T]) }L_{\varepsilon_j}(x,y) \leq \sup_{x\in V^{ac}_r} \inf_{y\in L^\infty([0,T]) }L(x,y),
\end{align*}
which, together with the standard minimax inequality, proves that 
\begin{align*}
&\sup_{x\in V^{ac}_r} \inf_{y\in L^\infty([0,T]) }L(x,y) = \inf_{y\in L^\infty([0,T])} \sup_{x\in V^{ac}_r}  L(x,y)\\
& = \inf_{y\in L^\infty([0,T])} \sup_{x\in V^{ac}_r}  \{2H(x) - \int_{[0,T]} \dot x_t\cdot  y_t  dt +\int_{[0,T]} G(y_t) dt \}.
\end{align*}

Step 3. Given $y\in L^\infty([0,T])$, we can find a sequence of functions $(y^n)_{n\geq 1}$ belonging to $V_r$ with $\|y^n\|_{\infty}\leq \|y\|_\infty$, which converges to $y$ in Lebesgue measure on $[0,T]$ (use Lusin's theorem plus an approximation of continuous functions with functions of bounded variation). Then, by the dominated convergence theorem, as $n\to 0$, 
$$
\int_{[0,T]} \dot x_t \cdot y^n_t dt  \to \int_{[0,T]} \dot x_t \cdot y_t dt.
$$ 
On the other hand, letting $A_n = \{x\in \mathbb R^n: G(x)\leq n\}$, we have that, for each $m\geq 1$, 
$$
\int_{[0,T]} G(y^n_t\mathbf 1_{y^n_t \in A_m}) dt \to \int_{[0,T]} G(y_t\mathbf 1_{y^n_t \in A_m}) dt
$$
by dominated convergence and, for each $n\geq 1$, 
$$
\int_{[0,T]} G(y^n_t\mathbf 1_{y^n_t \in A_m}) dt \to \int_{[0,T]} G(y^n_t) dt 
$$
by monotone convergence. These two observations together with an integration by parts for the second term, imply that
$$
\inf_{y\in L^\infty([0,T])} \sup_{x\in V^{ac}_r}  L(x,y) = \inf_{\mu \in M} \sup_{x\in V^{ac}_r}  \{2H(x) - \int_{[0,T]}x_t\cdot d\mu +\int_{[0,T]} G(\mu([t,T])) dt \}.
$$
In addition, the assumption of the proposition implies that the inner supremum may also be computed over $x\in V_r$. 

Step 4. Finally, the following computation allows to finish the proof. 
\begin{align*}
&\inf_{\theta\in M} \sup_{x\in V_r} \{2H(x) - \int_{[0,T]}  x_t \cdot \theta(dt) + \int_{[0,T]} G(\theta([t,T]))dt  - \bar J(x)\}\\ &\qquad=  \inf_{\theta\in M} {\sup_{x\in V_r}\inf_{\mu\in M}}\{2H(x) - \int_{[0,T]} x_t
(\theta(dt)+ \mu(dt)) \\ &\qquad\qquad \qquad +\int_{[0,T]} G(\theta([t,T]))dt + \int_{[0,T]}
  G(\mu([t,T])) dt\}\\
& \qquad= \inf_{\theta\in M} {\inf_{\mu\in M}\sup_{x\in V_r}}\{2H(x) - \int_{[0,T]} x_t
(\theta(dt)+ \mu(dt)) \\ &\qquad\qquad \qquad +\int_{[0,T]} G(\theta([t,T]))dt + \int_{[0,T]}
  G(\mu([t,T])) dt\} \\
& \qquad = \inf_{\theta\in M} \inf_{\mu\in M}\{2\widehat H\left(\frac{\theta+\mu}{2}\right) +\int_{[0,T]} G(\theta([t,T]))dt + \int_{[0,T]}
  G(\mu([t,T])) dt\} \\
& \qquad=2\inf_{\theta \in M}\{\widehat H(\theta) +
\int_{[0,T]} G(\theta([t,T]))dt\},
\end{align*}
where the last equality follows by convexity of $G$, taking $\mu = \theta$. 

\end{proof}

\paragraph{Concavity of the log-payoff} The concavity of the log-payoff function $H(x)$ may be tested using the following simple lemma. We recall that $X_i = \log S_i$ for $i=1,\dots,n$. 

\begin{lemma}\label{conc.lm}
Let $\tilde P(X) = P(S)$ and assume that $\tilde P$ is concave on the set $\mathcal X^+:=\{x \in D: \tilde
  P(x)>0\}$ and that the set $\mathcal X^+$ is convex. Then the log-payoff $H: D \mapsto \overline {\mathbb R}$ is concave in $x$.
\end{lemma}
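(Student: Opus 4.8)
The plan is to verify the defining inequality of concavity for the extended-real-valued function $H$ directly, splitting on whether the endpoints lie in $\mathcal X^+$. Concretely, I want to show that for all $x,y\in D$ and $\lambda\in[0,1]$,
$$
H(\lambda x + (1-\lambda) y) \geq \lambda H(x) + (1-\lambda) H(y),
$$
with the usual convention $0\cdot(-\infty)=0$, and noting that $\tilde P$ is a (finite-valued) pay-off so that $H=\log\tilde P<+\infty$ everywhere, which makes the right-hand side well defined in $[-\infty,+\infty)$.

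First I would dispose of the degenerate cases. If $\lambda\in\{0,1\}$ the inequality is an equality, so assume $\lambda\in(0,1)$. If at least one of $x,y$ is not in $\mathcal X^+$, say $x\notin\mathcal X^+$, then $H(x)=-\infty$ and hence $\lambda H(x)+(1-\lambda)H(y)=-\infty$ (here I use $\lambda>0$ and $H(y)<+\infty$), so the inequality holds trivially.

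The main case is $x,y\in\mathcal X^+$. Since $\mathcal X^+$ is assumed convex, the point $z:=\lambda x+(1-\lambda)y$ again lies in $\mathcal X^+$, so $\tilde P(z)>0$ and the concavity hypothesis on $\tilde P$ is applicable at $z$, giving
$$
\tilde P(z) \geq \lambda \tilde P(x) + (1-\lambda)\tilde P(y) > 0 .
$$
Then I would invoke that $\log$ is nondecreasing and concave on $(0,\infty)$: applying monotonicity to the displayed inequality and then concavity of $\log$ to the resulting convex combination of the positive numbers $\tilde P(x),\tilde P(y)$,
$$
H(z) = \log\tilde P(z) \geq \log\bigl(\lambda \tilde P(x) + (1-\lambda)\tilde P(y)\bigr) \geq \lambda\log\tilde P(x) + (1-\lambda)\log\tilde P(y) = \lambda H(x) + (1-\lambda)H(y).
$$
This establishes the claim.

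I do not expect any real obstacle here; the content is the elementary fact that a concave nondecreasing function of a concave function is concave, and essentially all of the work is bookkeeping with the value $-\infty$ and making sure that the convexity of $\mathcal X^+$ is used precisely where needed, namely to guarantee that the convex combination remains in the region where $\tilde P$ is positive (and hence where its concavity is assumed and where $\log$ is finite). The only point worth flagging is the implicit finiteness of $\tilde P$, which is needed so that $H$ never takes the value $+\infty$ and the right-hand side of the concavity inequality is unambiguous.
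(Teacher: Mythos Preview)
Your proof is correct and follows essentially the same approach as the paper's: verify the concavity inequality directly, using concavity of $\tilde P$ on $\mathcal X^+$ together with monotonicity and concavity of $\log$, and handle the case $x\notin\mathcal X^+$ or $y\notin\mathcal X^+$ trivially via $H=-\infty$. Your version is slightly more explicit about the degenerate cases and the convention $0\cdot(-\infty)=0$, but the argument is the same.
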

\begin{proof}
Let $0<\alpha<1$ and choose $a,b\in \mathcal X^+$. Then,
\begin{align*}&\alpha H(a) + (1-\alpha)H(b) = \alpha \log \tilde P(a) +
(1-\alpha)\log \tilde P(b) \\ &\leq \log
(\alpha \tilde P(a) + (1-\alpha)\tilde P(b))\leq \log \tilde P(\alpha
a + (1-\alpha) b) = H(\alpha a + (1-\alpha) b),
\end{align*}
which shows that $H$ is concave on $\mathcal X^+$. Since $H(x) = -\infty$ for $x\notin \mathcal X^+$ and the set $\mathcal X^+$ is convex, $H$ is also concave on the whole space. 
\end{proof}

\section{Examples}\label{ex.sec}
In this section, we specialize the results of the previous section to several option pay-offs encountered in practice. Throughout this section we assume that the L\'evy process satisfies the assumptions (A1) and (A2). For each considered pay-off, we need to check the assumptions of Proposition \ref{varadhanapplies} to ensure that the asymptotically optimal variance reduction measure $\theta^*$ may indeed be defined as in Definition \ref{optvar.def}, and the assumptions of Theorem \ref{maindual.thm}, to ensure that one can use the simplified formula \eqref{dualtheta} to compute $\theta^*$.

\paragraph{General European pay-off} We first check the assumptions of Theorem \ref{maindual.thm} and show that the problem of finding the optimal parameter $\theta^*$ becomes finite-dimensional. The assumptions of Proposition \ref{varadhanapplies} can be checked on a case-by-case basis as will be illustrated below. 
\begin{proposition}\label{euro.prop}
Assume that $H\left((x_t)_{0\leq t\leq T}\right) = h(x_T)$ with
$h:\mathbb R^n \to \mathbb R$ concave and upper semicontinuous. Then, assumptions of Theorem \ref{maindual.thm} are satisfied and 
$\theta^* = \bar\theta^* \delta_T$, where $\delta_T$ is the Dirac measure at $T$, and 
$$
\bar\theta^* = \arg\min_{\theta\in \mathbb R^n} \{\hat h(\theta) + T G(\theta)\},
$$
where $\hat h(\theta) = \sup_{v\in \mathbb R^n}\{h(v) -  v\theta\}$.
\end{proposition}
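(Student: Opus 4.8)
My plan is to verify that $H(x)=h(x_T)$ satisfies the hypotheses of Theorem~\ref{maindual.thm} and then to evaluate the dual functional on the right-hand side of~\eqref{dualtheta} explicitly; the latter turns out to be $+\infty$ except in the directions $\bar\theta\,\delta_T$, which collapses the minimization over $M$ onto the finite-dimensional one in the statement.

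For the hypotheses: the evaluation map $x\mapsto x_T$ is linear, it is $1$-Lipschitz for the norm $\|x\|=\int_0^T|\dot x_s|\,ds$ on $V^{ac}_r$, and it is continuous for the $\sigma(D,M)$-topology because $\delta_T\in M$ and $x_T=\int_{[0,T]}x_t\,\delta_T(dt)$; hence $H=h\circ(\cdot)_T$ is concave and upper semicontinuous on $V^{ac}_r$, being the composition of the concave u.s.c.\ function $h$ with a continuous linear map. For the approximation hypothesis (which is all that is needed for paths with $x_0=0$, the only ones appearing in the suprema of Theorem~\ref{maindual.thm} since $\bar J$ is otherwise infinite), I would extend $x$ by $x_t:=x_T$ for $t\in(T,T+1]$ and set $x^n_t:=\int_0^t n\,(x_{s+1/n}-x_s)\,ds+(t/T)\,n\int_0^{1/n}x_s\,ds$; then $x^n\in V^{ac}_r$, one checks $x^n_T=x_T$ for all $n$ so that $H(x^n)=H(x)$, and $x^n_t\to x_t$ pointwise on $[0,T]$ with $\sup_n\|x^n\|_\infty<\infty$, so $x^n\to x$ in $\sigma(D,M)$ by dominated convergence against each $\mu\in M$. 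Assumption~(A2) is in force throughout the section, so Theorem~\ref{maindual.thm} applies and it remains to minimize $\widehat H(\theta)+\int_0^T G(\theta([t,T]))\,dt$ over $\theta\in M$.

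The core computation is a dichotomy for $\widehat H(\theta)=\sup_{x\in V_r}\{h(x_T)-\int_{[0,T]}x_t\,\theta(dt)\}$. If $\psi\in V_r$ has $\psi_T=0$, then for every $\lambda>0$ the path $x+\lambda\psi$ belongs to $V_r$ with terminal value $x_T$, giving $\widehat H(\theta)\ge h(x_T)-\int x_t\,\theta(dt)-\lambda\int_{[0,T)}\psi_t\,\theta(dt)$; letting $\lambda\to\infty$ shows $\widehat H(\theta)=+\infty$ whenever $\int_{[0,T)}\psi_t\,\theta(dt)\neq 0$ for some such $\psi$, and testing against indicators $\mathbf 1_{[a,b)}$ with $0\le a<b<T$ shows this is the case precisely when $\theta|_{[0,T)}\neq 0$. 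Conversely $\theta|_{[0,T)}=0$ means $\theta=\bar\theta\,\delta_T$ with $\bar\theta=\theta(\{T\})$, and then $\int_{[0,T]}x_t\,\theta(dt)=\bar\theta\cdot x_T$, so $\widehat H(\bar\theta\,\delta_T)=\sup_{v\in\mathbb R^n}\{h(v)-\bar\theta\cdot v\}=\hat h(\bar\theta)$, while $\theta([t,T])=\bar\theta$ for all $t\in[0,T]$, so $\int_0^T G(\theta([t,T]))\,dt=TG(\bar\theta)$. Hence $\inf_{\theta\in M}\{\widehat H(\theta)+\int_0^T G(\theta([t,T]))\,dt\}=\inf_{\bar\theta\in\mathbb R^n}\{\hat h(\bar\theta)+TG(\bar\theta)\}$. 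If this finite-dimensional infimum is attained at $\bar\theta^*$, then $\theta^*:=\bar\theta^*\delta_T$ attains the right-hand side of~\eqref{dualtheta}; since the minimax computation in the proof of Theorem~\ref{maindual.thm} (Step~4, with $\mu=\theta$) bounds the functional of Definition~\ref{optvar.def} above, at each $\theta$, by $2\{\widehat H(\theta)+\int_0^T G(\theta([t,T]))\,dt\}$, while~\eqref{dualtheta} equates the two infima, evaluating at $\theta^*$ forces equality, so $\theta^*=\bar\theta^*\delta_T$ is asymptotically optimal.

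I expect the one nontrivial point to be the dichotomy for $\widehat H$, in particular the implication that $\int_{[0,T)}\psi\,d\theta=0$ for every $\psi\in V_r$ with $\psi_T=0$ forces $\theta|_{[0,T)}=0$ — i.e.\ the fact that a $V_r$-path may be perturbed arbitrarily on $[0,T)$ without altering its terminal value. Everything else (the mollification, the Lipschitz and continuity checks, and the reduction~\eqref{dualtheta}) is routine.
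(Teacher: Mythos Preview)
Your proof is correct and follows the paper's approach (show $\widehat H(\theta)=+\infty$ unless $\theta$ is supported at $\{T\}$, then reduce \eqref{dualtheta} to a finite-dimensional minimization), only with more care: you test against all indicators $\mathbf 1_{[a,b)}$ rather than just $\mathbf 1_{[0,T)}$ --- which, strictly speaking, only rules out $\theta([0,T))\neq 0$, not $\theta|_{[0,T)}\neq 0$ --- and you explicitly construct the approximating sequence for the hypothesis of Theorem~\ref{maindual.thm} that the paper dismisses as ``clear''. Your final remark, that the pointwise bound $F(\theta)\le 2\{\widehat H(\theta)+\int_0^T G(\theta([t,T]))\,dt\}$ together with equality of the infima forces any minimizer of the right-hand side to minimize the left, correctly supplies the direction of the ``moreover'' in Theorem~\ref{maindual.thm} that is actually needed here.
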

\begin{proof}
The log-payoff $H$ clearly satisfies the assumptions of Theorem \ref{maindual.thm}. If $\theta([0,T))\neq 0$, then  $\widehat H(\theta)=+\infty$ since one can choose $x_t = a\mathbf 1_{t<T}$ with $a$ arbitrary. This means that one can restrict the optimization in \eqref{dualtheta} to measures of the form $\theta \delta_T$ where $\theta \in \mathbb R^n$, and the rest of the proof follows easily. 
\end{proof}
\begin{remark}
We observe that the function $G(\theta)$ is known explicitly in most models. In addition, under the measure $\mathbb P^\theta$, $X$ is still a L\'evy
  process which often  falls into the same parametric class (see e.g., the variance gamma example in the following section). 
Thus, the only overhead of using the importance sampling estimator proposed in this paper for European options is due to the additional time needed to solve an explicit convex optimization problem in dimension $n$, which is usually negligible. 

\end{remark}
\paragraph{European basket put option} Now consider a specific European pay-off of the form $P(S_1,\dots,S_n) = (K-S_1-\dots-S_n)^+$. Then, using the notation of the previous paragraph,
$$
h(x_1,\dots,x_n) = \log(K - e^{x_1}-\dots-e^{x_n})^+.
$$
Since this functions is bounded from above and continuous on the set where it is not equal to $-\infty$, assumptions of Proposition \ref{varadhanapplies} are satisfied and one can define the asymptotically optimal variance reduction measure.
On the other hand, the function $\tilde P = (K - e^{x_1}-\dots - e^{x_n})^+$ is concave on
$\{\tilde P>0\}$ by convexity of the exponential and the set 
$\{e^{x_1} + \dots e^{x_n}<K\}$ is convex. Therefore, by Lemma \ref{conc.lm}, the function $h$ is concave. Since it is also clearly upper semicontinuous, the optimal measure $\theta^*$ is given in Proposition \ref{euro.prop}, where the convex conjugate of $h$ is easily shown to be explicit and given by 
$$
 \hat h(\theta) = \left\{\begin{aligned}
&+\infty && \theta_k \geq 0 \text{ for some $k$}\\
& -\left(1-\sum_k \theta_k\right)\log \frac{1- \sum_k \theta_k}{K} - \sum_{k} \theta_k \log(-\theta_k) &&\text{otherwise.}
\end{aligned}\right.
$$
Numerical examples for the European basket put option are given in the next section. 

\paragraph{Asian put option} In this example we consider the Asian option with log-payoff
$$H(x) = \log \left(K - \frac{1}{T}\int_0^Te^{x_t } dt
\right)^+.
$$ 
First note that $H$ may not be continuous in the $\sigma(D,M)$-topology even on the set where it is finite. We shall nevertheless use the definition \eqref{optvar.def} of the asymptotically optimal variance reduction parameter. This may be justified by the fact the discretely sampled Asian option is $\sigma(D,M)$-continuous, and the variance of the discretely sampled Asian pay-off converges to that of the continuously sampled pay-off as the discretization step tends to zero. 

Let us now check the assumptions of Theorem \ref{maindual.thm}. 
Remark that $K - \frac{1}{T}\int_0^Te^{x_t} dt$ is concave by
  convexity of the exponential, and for $x,y\in D$ such that $\frac{1}{T}\int_0^Te^{x_t} dt <
  K$ and $\frac{1}{T}\int_0^Te^{y_t} dt< K$, 
$$
\frac{1}{T}\int_0^Te^{\alpha x_t + (1-\alpha) y_t } dt
\leq \frac{1}{T}\int_0^T\left(\alpha e^{x_t} + (1-\alpha)
e^{y_t}\right) dt< K,
$$
which implies that the set $\{\tilde P>0\}$ is convex. Therefore, by Lemma \ref{conc.lm}, $H$ is concave. Moreover, assume that $x^n\to x$ in $V^{ac}_r$. Then, by Fatou's lemma
$$
\lim\inf_n \int_0^Te^{x^n_t} dt \geq \int_0^Te^{x_t} dt,
$$
which shows that $H$ is upper semicontinuous. Finally, $x\in V_r$ may be approximated by a uniformly bounded sequence of $(x_n)\subset V^{ac}_r$, so that $H(x_n)\to H(x)$ by the dominated convergence theorem. Therefore, all assumptions of Theorem \ref{maindual.thm} are satisfied by the Asian put option. 
The convex conjugate of $H$ and the asymptotically optimal parameter $\theta^*$ are described by the following proposition.
\begin{proposition}${}$
\begin{itemize}
\item[i.] If $\theta$ is absolutely continuous, with density (also denoted
  by $\theta_t$) satisfying $\theta_t\leq 0$ for all $t\in [0,T]$, then $\widehat H(\theta)$ is given by
$$
\widehat H(\theta) = \log \frac{K}{1 - \int_0^T \theta_t dt}  -
\int_0^T \theta_t \log \frac{-K T \theta_t}{1-\int_0^T \theta_s ds} dt.
$$
Otherwise $\widehat H(\theta) = +\infty$. 
\item[ii.] The function $\psi^*_t = \int_{T-t}^T\theta^*_s ds$ is the
solution of the boundary value problem
\begin{align*}
&\dot p_t = -G'(\psi^*_t),&& p_T = -\log\frac{K}{1-\psi^*_T} +1,\\
& \dot \psi^*_t = -\frac{1}{T}e^{p_t+1},&&\psi^*_0 = 0.
\end{align*}
\end{itemize}
\end{proposition}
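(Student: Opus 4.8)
The argument naturally splits in two: (i) is an explicit Fenchel/Lagrange computation for $\widehat H$, and (ii) follows by inserting the resulting formula into the dual functional of Theorem~\ref{maindual.thm} and solving the ensuing calculus-of-variations problem.

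\textbf{Part (i).} First I would show that $\widehat H(\theta)=+\infty$ unless $\theta$ is absolutely continuous with $\theta_t\le 0$ a.e. Since $H(x)$ depends on $x$ only through $\frac1T\int_0^T e^{x_t}\,dt$, a quantity insensitive to modifying $x$ on a Lebesgue-null set, if $\theta$ has a nontrivial singular part then perturbing $x$ by a constant concentrated on a null set that carries this singular part keeps $H(x)$ finite while driving $-\int x_t\,\theta(dt)$ to $+\infty$; and if the density $\theta_t$ exceeds some $\delta>0$ on a set of positive measure, pushing $x_t\to-\infty$ there does the same while keeping $H(x)$ bounded below. For absolutely continuous $\theta$ with $\theta_t\le 0$, the map $x\mapsto H(x)-\int_0^T x_t\theta_t\,dt$ is concave (through $u_t=e^{x_t}$ it is a concave function of $u$ on the convex set $\{\frac1T\int u_t\,dt<K\}$), so the maximiser is characterised by the pointwise stationarity relation $-\frac1T e^{x_t}\big/\big(K-\frac1T\int_0^T e^{x_s}\,ds\big)=\theta_t$. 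This gives $e^{x_t}=-T\theta_t D$ with $D:=K-\frac1T\int_0^T e^{x_s}\,ds$, and integrating over $[0,T]$ forces $D=K/(1-\int_0^T\theta_s\,ds)$; substituting $x_t=\log(-T\theta_t D)$ back into $H(x)-\int x_t\theta_t\,dt$ yields the stated formula for $\widehat H$ (directions with $\theta_t=0$ on part of $[0,T]$ are recovered by approximating with densities bounded away from $0$ together with the convention $0\log 0=0$).

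\textbf{Part (ii).} By Theorem~\ref{maindual.thm} the asymptotically optimal $\theta^*$ minimises $\widehat H(\theta)+\int_0^T G(\theta([t,T]))\,dt$ over $\theta\in M$, and by part (i) the competition may be restricted to absolutely continuous $\theta$ with $\theta_t\le 0$. I would then pass to $\psi_t:=\int_{T-t}^T\theta_s\,ds$, which satisfies $\psi_0=0$, $\psi_T=\int_0^T\theta_s\,ds$, $\dot\psi_t=\theta_{T-t}\le 0$ and $\theta([t,T])=\psi_{T-t}$. After this time-reversing substitution the $G$-term becomes $\int_0^T G(\psi_v)\,dv$ and, using part (i), $\widehat H(\theta)$ becomes $(1-\psi_T)\log\frac{K}{1-\psi_T}-\int_0^T \dot\psi_v\log(-T\dot\psi_v)\,dv$, so that one is minimising $\Phi(\psi_T)+\int_0^T\big(-\dot\psi_v\log(-T\dot\psi_v)+G(\psi_v)\big)\,dv$ with $\Phi(m)=(1-m)\log\frac{K}{1-m}$, a Bolza problem with fixed left endpoint $\psi_0=0$ and free right endpoint. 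Computing the first variation and integrating by parts gives the Euler--Lagrange equation $\frac{d}{dv}\big(-\log(-T\dot\psi_v)-1\big)=G'(\psi_v)$, i.e.\ $\ddot\psi_v/\dot\psi_v=-G'(\psi_v)$, together with the free-endpoint transversality condition $-\log(-T\dot\psi_T)-1+\Phi'(\psi_T)=0$. Introducing $p_v$ through $\dot\psi_v=-\frac1T e^{p_v+1}$ (equivalently $p_v=\log(-T\dot\psi_v)-1$) turns the Euler--Lagrange equation into the first-order pair $\dot p_v=-G'(\psi_v)$, $\dot\psi_v=-\frac1T e^{p_v+1}$, and rewrites the transversality relation as the stated terminal condition relating $p_T$ and $\psi_T$; together with $\psi_0=0$ this is the announced boundary value problem.

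\textbf{Main obstacle.} I expect the delicate part to be making the variational step rigorous. One must first show that the minimiser of $F(\theta)=\widehat H(\theta)+\int_0^T G(\theta([t,T]))\,dt$ exists: Assumption (A2) forces $\psi^*_v$ to stay in the (bounded) effective domain of $G$, which in particular bounds $\psi^*_T$, while the superlinearity of $-\dot\psi_v\log(-T\dot\psi_v)$ yields uniform integrability of $\dot\psi^*$ along minimising sequences, so one may pass to a weak $L^1$ limit and conclude by lower semicontinuity of the two convex running costs $-\dot\psi_v\log(-T\dot\psi_v)$ and $G(\psi_v)$ and continuity of $\Phi$. One then has to check that this minimiser is regular enough for the pointwise Euler--Lagrange and transversality relations to hold --- in particular that $\dot\psi^*$ is a genuine function with $\dot\psi^*_v<0$ and that $\psi^*_v$ lies in the interior of the effective domain of $G$, so that $G'(\psi^*_v)$ exists (recall $G$ is only lower semicontinuous). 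The remaining points --- differentiation under the integral in the first variation, and the Fubini interchange turning $\int_0^T G'(\theta([t,T]))\int_t^T\delta_s\,ds\,dt$ into $\int_0^T\delta_t\int_0^t G'(\psi^*)\,ds\,dt$ --- are routine given these facts and the explicit form of $\widehat H$.
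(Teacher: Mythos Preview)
Your proposal follows essentially the same route as the paper. In Part~(i) the paper likewise first rules out measures $\theta$ that are not negative and absolutely continuous, then solves the concave maximisation by the same first-order condition and back-substitution. In Part~(ii) the paper makes the identical substitution $\psi_t=\int_{T-t}^T\theta_s\,ds$, arrives at the same Bolza functional, and derives the boundary value problem via Pontryagin's maximum principle rather than the Euler--Lagrange/transversality formulation you use; these are of course equivalent, and your discussion of existence and regularity of the minimiser in fact goes beyond the paper, which treats Part~(ii) as a formal computation.

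One point in Part~(i) deserves more care. The supremum defining $\widehat H(\theta)$ is taken over $x\in V_r$, the space of \emph{RCLL} functions of bounded variation. Your argument for the singular case (``perturb $x$ by a constant concentrated on a null set that carries the singular part'') does not in general produce an element of $V_r$: if the carrier is, say, a Cantor-type set, the resulting $x$ fails to be right-continuous. The paper avoids this by using the $\varepsilon$--$\delta$ characterisation of non-absolute-continuity to find, for each $\delta>0$, a \emph{finite} union of intervals of total length below $\delta$ carrying $\theta$-mass at most $-\varepsilon$, and then takes $x$ to be a two-valued step function on this union---manifestly in $V_r$. The same remark applies to your ``push $x_t\to-\infty$ on a set where the density is positive'' step: you should first pass to a finite union of intervals approximating that set. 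These are routine fixes, but they are the only place where your sketch is genuinely looser than the paper's argument.
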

\begin{remark}
This system can be integrated explicitly:
$$
\ddot\psi_t = -\dot\psi_t G'(\psi_t) \quad \Rightarrow\quad \dot\psi_t = -G(\psi_t) - C\quad \Rightarrow\quad \int_0^{\psi_t}\frac{d\phi}{C + G(\phi)} = -t,
$$
where the constant $C$ is determined from the terminal condition. 

\end{remark}
\begin{proof}
By definition,
$$
\widehat H(\theta) = \sup_{x\in V_r}\left\{\log \left(K - \frac{1}{T}\int_0^Te^{x_t } dt
\right)^+ - \int_{[0,T]} x_t \theta(dt)\right\}.
$$
First, assume that there is an interval $[a,b)\in [0,T]$ such that $\theta([a,b))>0$. Then, letting $x_t = \log K$ for $t\notin [a,b)$ and $x_t = -N$ for $t\in[a,b)$, and making $N$ tend to $+\infty$, we see that $\widehat H(\theta)=+\infty$. Therefore, from now on we may assume that $\theta$ is a negative measure. Assume that it is not absolutely continuous. Then, there exists $\varepsilon>0$ such that for all $\delta >0$, there exists a finite sequence of pairwise disjoint sub-intervals $[x_k,y_k)$ of $[0,T]$ satisfying 
$$
\sum_k (y_k-x_k)<\delta
$$
such that 
$$
\sum_k \theta([x_k,y_k))<-\varepsilon.
$$
We define $I = \cup_{k}[x_k,y_k)$. 
Let $x_t = N> \log (K/2)$ when $t\in I$ and $x_t = \log (K/2)$ otherwise. Then,
$$
\log \left(K - \frac{1}{T}\int_0^Te^{x_t } dt
\right)^+ - \int_{[0,T]} x_t \theta(dt)\geq \log \left(K/2 - \frac{\delta}{T}e^N
\right)^+ - \theta(I^c)\log \frac{K}{2}  +N\varepsilon.
$$ 
Taking $N$ sufficiently large and $\delta$ sufficiently small, we see that $\widehat H(\theta) = +\infty$ in this case as well. We may therefore assume that $\theta$ is an absolutely continuous negative measure, and, with an abuse of notation, its density will also be denoted by $\theta_t$. The computation of $\widehat H(\theta)$ reduces to computing the supremum 
\begin{align}
\sup_{x}\left\{\log \left(K - \frac{1}{T}\int_0^Te^{x_t } dt
\right) - \int_{[0,T]} x_t \theta_tdt\right\},\label{asianHhat}
\end{align}
where with no loss of generality we may consider only those $x\in V_r$ for which the expression under the sign of logarithm is positive. The first order condition for this optimization problem writes
$$
\theta_t = -\frac{1}{T}\frac{ e^{x_t}}{K - \frac{1}{T}\int_0^Te^{x_s } ds}. 
$$
Integrating this expression from $0$ to $T$, we find
$$
K-\frac{1}{T}\int_0^Te^{x_s } ds  =\frac{ K }{1- \int_0^T \theta_t dt}
$$
and so 
$$
x_t = \log \frac{ - K T \theta_t }{1- \int_0^T \theta_t dt}.
$$
substituting this into \eqref{asianHhat}, we obtain the first part of the proposition.

To compute the optimal importance sampling measure $\theta^*$, we need to 
solve
\begin{align*}
&\min_{\theta\in M} \{\widehat H(\theta) + \int_0^T G(\theta([t,T]))
dt\}\notag\\
& = \min_{\theta_t \leq 0}\Big(1-\int_0^T \theta_s ds\Big)\log \frac{K}{1 - \int_0^T \theta_t dt}  -
\int_0^T \theta_t \log(-T \theta_t) dt +  \int_0^T
G\Big(\int_t^T\theta_s\Big)dt
\end{align*}
Introducing the function $\psi_t = \int_{T-t}^T\theta_s ds$, the optimization problem becomes
$$
\min_{\dot\psi_t \leq 0}\Big(1-\psi_T\Big)\log \frac{K}{1 - \psi_T}  -
\int_0^T \dot\psi_t \log(-T \dot\psi_t) dt +  \int_0^T
G(\psi_t)dt.
$$
By Pontryagin's maximum principle for deterministic control problems, for every $t\in [0,T]$, the optimal $\dot\psi_t$ is the minimizer of 
$$
p_t\dot \psi_t - \dot\psi_t \log(-T\dot \psi_t),
$$
where $p_t$ is the ``adjoint state'' satisfying
$$
\dot p_t = - G'(\psi^*_t),\qquad p_T = 1 - \log K + \log(1-\psi^*_T). 
$$
\end{proof}
Numerical examples for the Asian put option are given in the next section. 

\section{Numerical illustrations}\label{num.sec}
In this section, we illustrate the results of this paper with numerical computations in the multivariate variance gamma model. Let $b\in \mathbb R^n$, $\Sigma$ be a positive definite $n\times n$
matrix, and define
$$
X_t = \mu t + b \Gamma_t + \Sigma W_{\Gamma_t},
$$
where $W$ is a standard Brownian motion in dimension $n$, $\Gamma$ is a gamma process with $\mathbb E[\Gamma_t] = t$ and
$\text{Var}\, \Gamma_t = t/\lambda$, and $\mu$ is chosen to have
$\mathbb E[e^{X^i_t}] = 1$ for all $t$ and $i=1,\dots,n$. Then, the cumulant generating function $X_1$ under the original measure is given by 
$$
G(\theta) =  \langle\theta,\mu\rangle - \lambda \log \left(1- \frac{\langle\theta,
    b\rangle}{\lambda} - \frac{\langle\Sigma \theta, \theta
    \rangle}{2\lambda}\right), \theta \in \mathbb R^n. 
$$
with 
$$
\mu^i = \lambda \log \left(1- \frac{
    b^i}{\lambda} - \frac{ \Sigma_{ii}}{2\lambda}\right),\quad
i=1,\dots,n. 
$$
Under the measure $\mathbb P^\theta$, the cumulant generating function of $X_1$ can be written as
$$
G^\theta(u) = \langle u,\mu\rangle - \lambda \log \left(1-
  \frac{\langle u,
    b + \Sigma\theta\rangle}{\lambda u^*} - \frac{\langle\Sigma u,  u
    \rangle }{2\lambda u^*}\right) ,\quad u^* = 1- \frac{\langle\theta,
    b\rangle}{\lambda} - \frac{\langle\Sigma \theta, \theta
    \rangle}{2\lambda}.
$$
Therefore, under the measure $\mathbb P^\theta$, the process $X^1$ is also a variance gamma process with parameters $\tilde \mu = \mu$, $\tilde \lambda = \lambda$, $\tilde b = \frac{b+\Sigma\theta}{u^*}$ and $\tilde \Sigma = \frac{\Sigma}{u^*}$.

\paragraph{Vanilla put in the variance gamma model}

In the first example, we let $n=1$ and price a European put option with pay-off
$P(S) = (K-S)^+$. The model parameters are $\lambda = 1$, $b = -0.2$ and $\sqrt{\Sigma} =
0.2$, which corresponds to annualized volatility of $28\%$, skewness
of $-1.77$ and excess kurtosis of $2.25$. Table \ref{vred1d.tab} shows the variance reduction ratios and the values of the asymptotically optimal parameter $\theta^*$ as function of strike and time to maturity. We see that the highest ratios are attained for out-of-the-money options, whose exercise is a rare event, but that even for at-the-money options, the variance reduction ratios remain quite significant. It is also important to understand, how close are these ratios to the optimal ones which would have been obtained by minimizing the actual variance of the estimator rather than its asymptotic proxy. This is illustrated in Figure \ref{vred1d.fig}, which plots the variance of the importance sampling estimator (evaluated by Monte Carlo) as function of the parameter $\theta$. We see that for the chosen parameter values $\theta^*$ is very close to optimality.

\begin{table}
\begin{center}
\begin{tabular}{l|cccccc}
$T$ & $0.25$ & $0.5$ & $1$ & $2$ & $3$ \\\hline
Optimal parameter $\theta^*$ & $-2.77$ & $-2.45$ & $-2.06$& $-1.65$ & $-1.41$\\
Variance ratio &$3.38$ &  $3.61$&  $3.78$  &  $3.75$ &  $3.67$\\\hline \hline
$K$ & $0.5$ & $0.7$ & $0.9$ & $1.1$ & $1.3$ & $1.5$\\\hline
Optimal parameter $\theta^*$ &$-2.84 $&$-2.56$&$ -2.24$&$ -1.88$&$ -1.54$&$ -1.25$\\
Variance ratio &$17.44$ &  $6.80$ &  $4.14$ &  $3.19$&
         $3.63$ &  $3.63$
\end{tabular}
\end{center}
\caption{Single-asset European put option. Top: Variance reduction ratios as function of time to maturity $T$, for
$K=1$.
Bottom: Variance reduction ratios as function of strike $K$, for
$T=1$.}
\label{vred1d.tab}
\end{table}
\begin{figure}
\centerline{\includegraphics[width=0.55\textwidth]{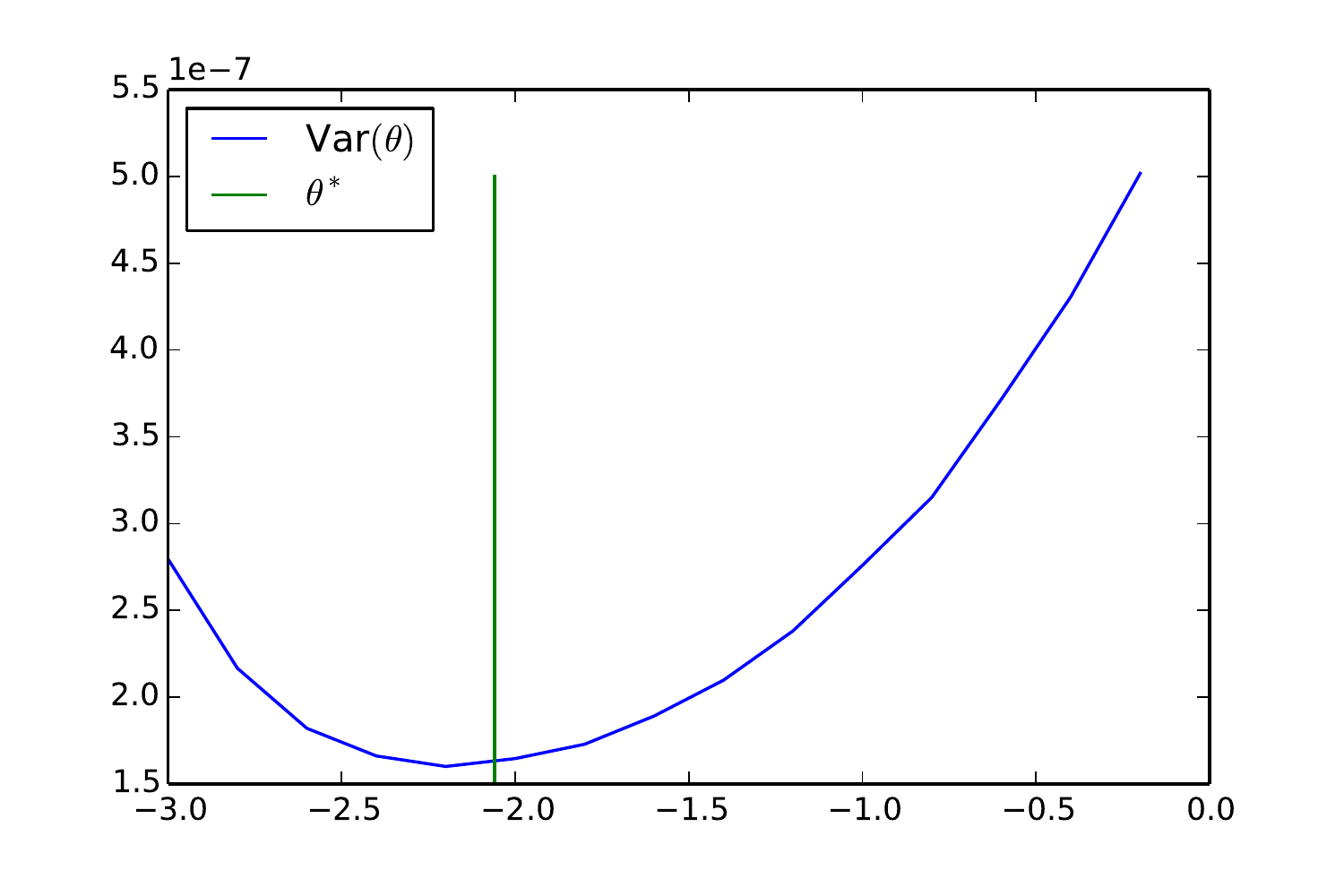}\hspace*{-0.5cm}\includegraphics[width=0.55\textwidth]{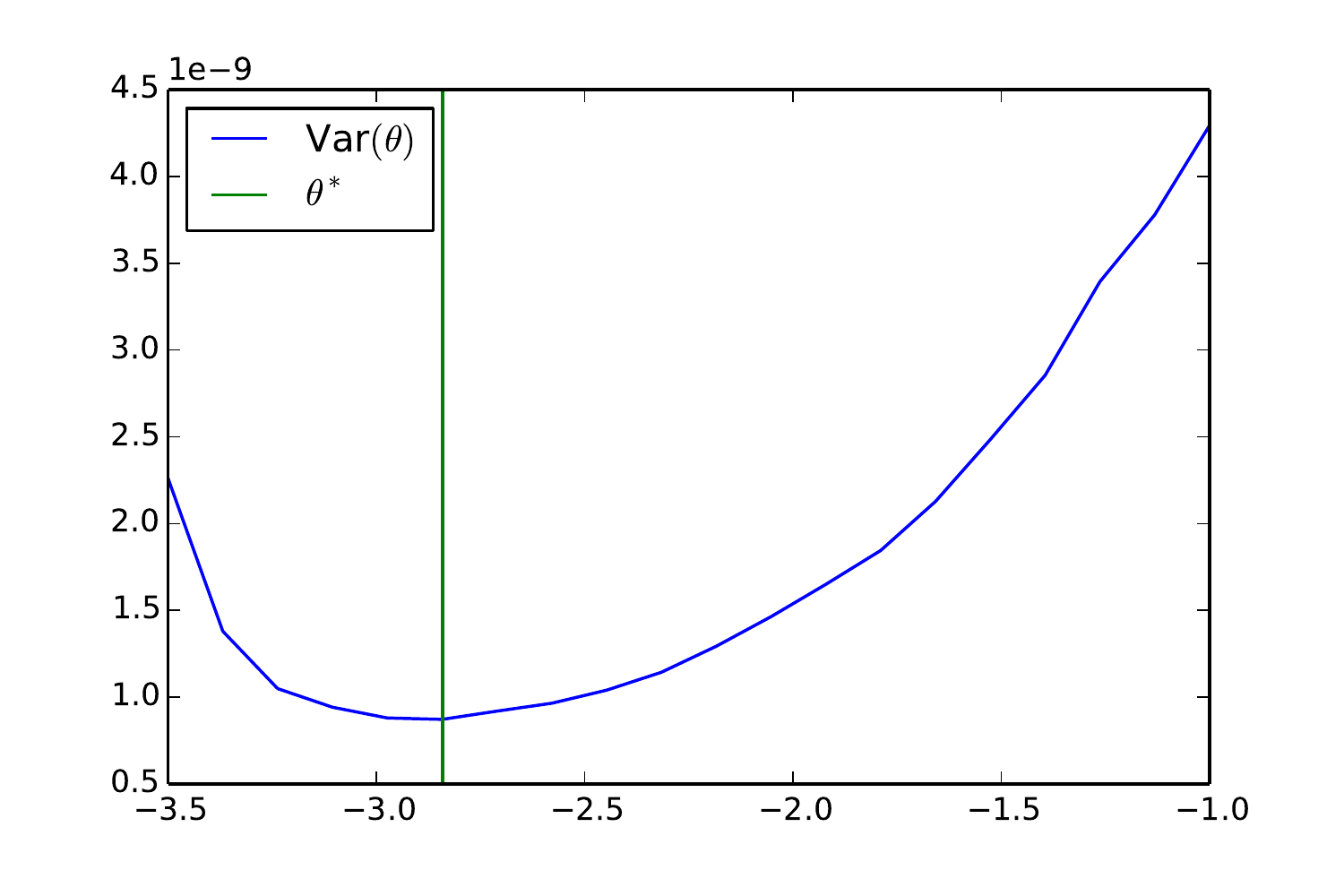}}
\caption{Optimality of $\theta^*$. Left: $T=1$, $K=1$. Right: $T=1$, $K=0.5$. }
\label{vred1d.fig}
\end{figure}

\paragraph{Basket put in the variance gamma model}
In this example, we let $n=3$ and price a European basket put option with pay-off
$P(S) = (K-S^1 - S^2-S^3)^+$. The model parameters are
$$
\lambda = 1,\quad b = \left(\begin{aligned}-0.2 \\ -0.2 \\ -0.2\end{aligned}\right)\quad \text{and}\quad \Sigma = \left(\begin{aligned}
&0.04  &&0.02 &&0.02 \\
&0.02  &&0.04 &&0.02 \\
&0.02  &&0.02 &&0.04 
\end{aligned}\right).
$$
Table \ref{vred3d.tab} shows the variance reduction ratios and the values of the asymptotically optimal parameter $\theta^*$ as function of strike and time to maturity. We see that the values are similar to the one-dimensional case.

\begin{table}

\begin{center}
\begin{tabular}{l|ccccccc}
$T$ & $0.25$ & $0.5$ & $1$ & $2$ & $3$ \\\hline
Variance ratio &$3.55$ &  $3.67$&  $3.85$  &  $3.81$ &  $3.76$\\\hline\hline
$K$ & $1.5$ & $2$ & $2.5$ & $3$ & $3.5$ & $4$ & $4.5$ \\
Variance ratio, $T=1$ & $23.1$ & $9.78$ & $5.53$ & $3.80$ & $3.23$ &
$4.22$ & $5.14$\\
Variance ratio, $T=3$ & $6.63$ & $4.88$ & $4.35$ & $3.81$ & $2.96$ &
$2.42$ & $2.19$
\end{tabular}
\end{center}
\caption{European basket put option. Top: Variance reduction ratios as function of time to maturity $T$, for
$K=1$. Bottom: variance reduction ratios as function of strike for $T=1$ and $T=3$.}
\label{vred3d.tab}
\end{table}

\paragraph{Asian put in the variance gamma model}
In this final example we price an Asian put
option with pay-off $P(S) = \left(K- \frac{1}{T}\int_0^T S_t
  dt\right)^+$, for $T=1$. Table \ref{vredasian.tab} shows the variance reduction ratios as function of strike, and Figure \ref{vredasian.fig} plots the ``distribution function'' $\theta^*([t,T])$ of the asymptotically optimal measure $\theta^*$ as function of time $t$ for $K=1$. We see that the variance reduction ratios are even better than the ones obtained for the European call, since for low strikes the exercise probability is smaller for the Asian option than for the European option with the same strike and maturity. 

\begin{table}

\begin{center}
\begin{tabular}{l|cccccc}
$K$ & $0.5$ & $0.7$ & $0.9$ & $1.1$ & $1.3$ & $1.5$\\\hline
Variance ratio &$39.7$ &  $10.6$ &  $4.82$ &  $3.21$&
         $5.08$ &  $6.91$
\end{tabular}
\end{center}
\caption{Asian put option. Variance reduction ratios as function of strike for $T=1$.}
\label{vredasian.tab}
\end{table}

\begin{figure}
\centerline{\includegraphics[width=0.6\textwidth]{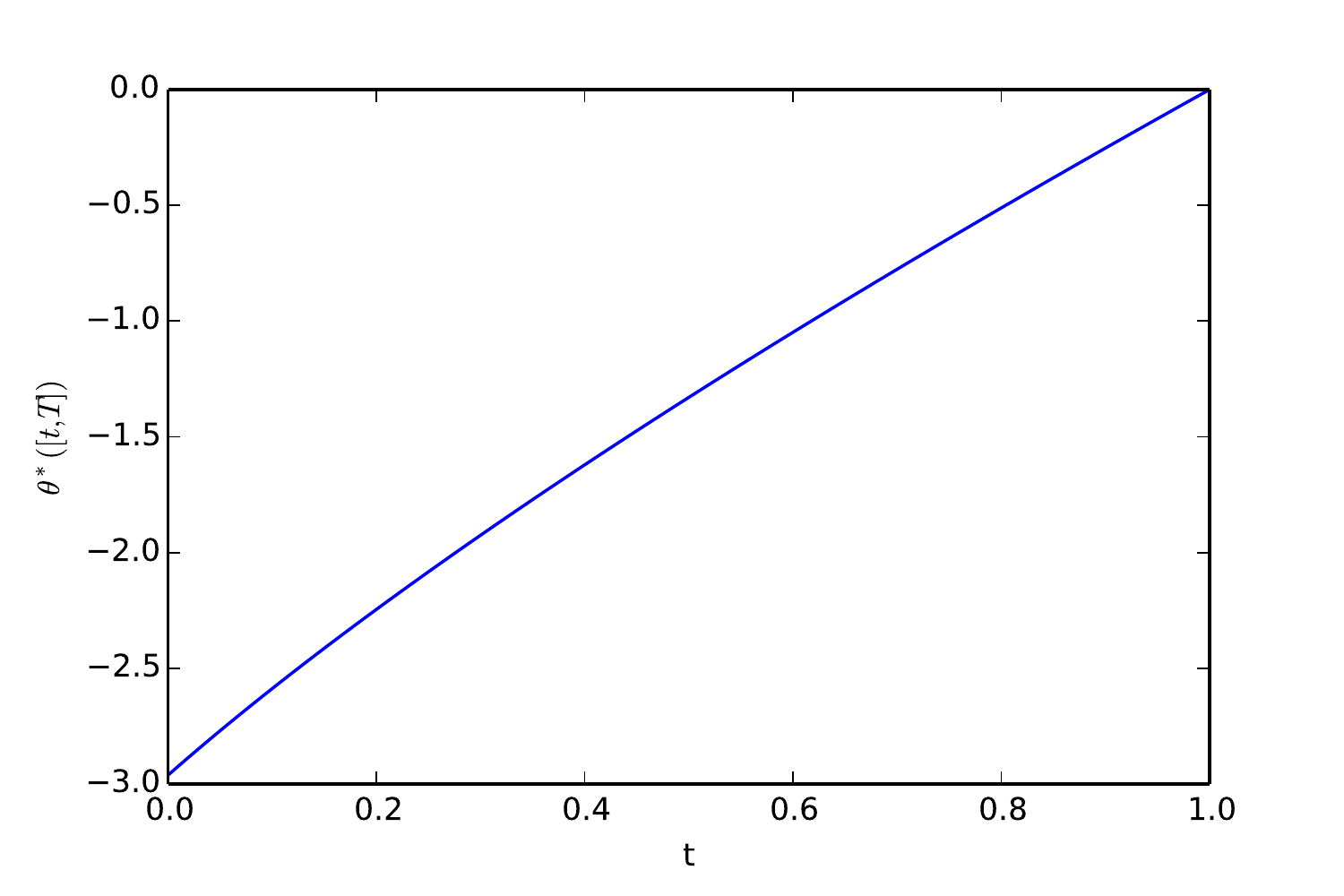}}
\caption{Asian put option. Distribution function of the asymptotically optimal measure $\theta^*$, for $K=1$ and $T=1$.}
\label{vredasian.fig}
\end{figure}

\section*{Appendix}
\begin{lemma}\label{wienerhopf.lm}
Let $X$ be a L\'evy process. We denote $\overline X_t  = \sup_{0\leq s\leq t}
X_s$. Let $\beta'>0$ be such that $\mathbb
E[e^{\beta' X_t }]<\infty$ and $\mathbb E[e^{\beta' X_t} X_t]
< 0$. Then,
$$
\mathbb E[e^{\beta \overline X_\infty}] < \infty
$$
for all $\beta \in (0,\beta')$. 
\end{lemma}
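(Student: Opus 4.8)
The plan is to combine an exponential (Esscher) change-of-measure martingale with Doob's maximal inequality, exploiting the gap between $\beta$ and $\beta'$ to reduce to an exponent strictly below $1$.

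First I would record the relevant one-dimensional convexity facts. Write $g_t(s) := \mathbb E[e^{sX_t}] = e^{t\kappa(s)}$, where $\kappa$ is the Laplace exponent of $X$. Since $\mathbb E[e^{\beta' X_t}]<\infty$, the interval $[0,\beta']$ lies in the finiteness domain, on which $s\mapsto g_t(s)$ is finite and convex with $g_t(0)=1$, and $\frac{d}{ds}g_t(s)=\mathbb E[X_t e^{sX_t}]$; in particular the sign of $\mathbb E[X_t e^{sX_t}]$ is that of $\kappa'(s)$ and does not depend on $t>0$. The hypothesis $\mathbb E[X_t e^{\beta' X_t}]<0$ thus says that $g_t$ is decreasing at $\beta'$; since $g_t'$ is nondecreasing by convexity, $g_t$ is strictly decreasing on all of $[0,\beta']$, and as $g_t(0)=1$ we obtain $g_t(s)<1$, i.e.\ $\kappa(s)<0$, for every $s\in(0,\beta']$.

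Next, fix $\gamma\in(\beta,\beta')$ and put $p:=\beta/\gamma\in(0,1)$. Because $\kappa(\gamma)<\infty$, the process $M_t:=e^{\gamma X_t - t\kappa(\gamma)}$ is a nonnegative martingale with $M_0=1$; being a nonnegative martingale it converges a.s.\ as $t\to\infty$, so $Y:=\sup_{t\ge0}M_t$ is a.s.\ finite, and Doob's maximal inequality applied on $[0,T]$ followed by $T\to\infty$ gives $\mathbb P[Y>\lambda]\le\mathbb E[M_0]/\lambda=1/\lambda$ for every $\lambda>0$. Since $\kappa(\gamma)\le0$, for each $t\ge0$ we have $e^{\gamma X_t}=M_t e^{t\kappa(\gamma)}\le M_t$, hence $e^{\gamma\overline X_\infty}=\sup_{t\ge0}e^{\gamma X_t}\le Y$ (an inequality in $[0,+\infty]$, which already forces $\overline X_\infty<\infty$ a.s.), and therefore $e^{\beta\overline X_\infty}\le Y^{\,p}$.

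It then remains to estimate $\mathbb E[Y^{p}]$, which I would do by the layer-cake formula: $\mathbb E[Y^{p}]=\int_0^\infty p\lambda^{p-1}\mathbb P[Y>\lambda]\,d\lambda \le \int_0^1 p\lambda^{p-1}\,d\lambda + \int_1^\infty p\lambda^{p-2}\,d\lambda = 1 + \frac{p}{1-p}<\infty$, the second integral converging precisely because $p<1$. Combining the bounds, $\mathbb E[e^{\beta\overline X_\infty}]\le 1/(1-p)<\infty$. The one point worth care — and the reason the statement asks for $\beta$ \emph{strictly} below $\beta'$ — is that one cannot bound $\mathbb E[\sup_t M_t]$ directly, since the maximal function of an $L^1$-bounded nonnegative martingale need not be integrable; the slack between $\beta$ and $\beta'$ is exactly what allows passage to the exponent $p<1$, for which Doob's $1/\lambda$ tail estimate is integrable at infinity.
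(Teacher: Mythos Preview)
Your proof is correct and takes a genuinely different, more elementary route than the paper's. The paper proceeds via the Wiener--Hopf factorisation identity for $\mathbb E[e^{\beta\overline X_t}]$, lets the resolvent parameter $q\to 0$, then splits the resulting integrand with Cauchy--Schwarz: one factor is controlled by moment estimates after truncating large negative jumps, and the other is shown to decay exponentially via Cram\'er's theorem under the Esscher-tilted measure. Your argument bypasses all of this by observing directly from convexity that $\kappa(\gamma)<0$ on $(0,\beta']$, so that the exponential martingale $M_t=e^{\gamma X_t-t\kappa(\gamma)}$ dominates $e^{\gamma X_t}$, and then Doob's weak-type maximal inequality combined with the layer-cake formula for the $p$-th moment ($p=\beta/\gamma<1$) finishes the job. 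Your method is shorter, avoids both the Wiener--Hopf machinery and large-deviation input, and yields the explicit bound $\mathbb E[e^{\beta\overline X_\infty}]\le \beta'/(\beta'-\beta)$; the paper's approach, by contrast, makes the connection to fluctuation theory and Cram\'er-type asymptotics explicit, which fits the large-deviations theme of the article even if it is heavier for this particular lemma.
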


\begin{proof}
By the Wiener-Hopf factorization (Theorem 6.16 in \cite{Kyprianou},
see also Equation (47.9) in \cite{sato}),
$$
q\int_0^\infty e^{-qt} \mathbb E[e^{\beta \overline X_t}] dt =
\exp\left(\int_0^\infty \frac{e^{-qt}}{t} dt
    \int_0^\infty (e^{\beta x} - 1)  \mathbb P[X_t\in
    dx] \right).
$$
Letting $q\to 0$ and using the Cauchy-Schwarz inequality, we get
\begin{align*}
\mathbb E[e^{\beta \overline X_\infty}] &= \exp\left(\int_0^\infty \frac{1}{t} dt
    \int_0^\infty (e^{\beta x} - 1)  \mathbb P[X_t\in
    dx] \right)\\
& = \exp\left(\int_0^\infty \frac{1}{t} dt\, 
    \mathbb E[(e^{\beta X_t}-1)\mathbf 1_{X_t \geq 0}] \right)\\
& \leq \exp\left(\int_0^\infty \frac{\beta}{t} dt\, 
    \mathbb E[X_t\,e^{\beta X_t}\mathbf 1_{X_t \geq 0}] \right)\\
& \leq \exp\left(\int_0^\infty \frac{\beta}{t} dt\, \mathbb
  E[|X_t|^{\frac{\beta'}{\beta'-\beta}}\mathbf 1_{X_t\geq 0}]^{1-\frac{\beta}{\beta'}}\,
    \mathbb E[e^{\beta' X_t}\mathbf 1_{X_t \geq 0}]^{\frac{\beta}{\beta'}} \right)
\end{align*}

Let
$\widehat X_t = X_t - \sum_{s\leq t:\Delta X_s < -1} \Delta
X_s$. Then, all moments of $\widehat X_t$ are finite and 
$$
\mathbb
  E[|X_t|^{\frac{\beta'}{\beta'-\beta}}\mathbf 1_{X_t\geq 0}] \leq
  \mathbb E[|\widehat X_t|^{\frac{\beta'}{\beta'-\beta}}]. 
$$
Let $n^* = \inf\{n\in \mathbb N: n\geq
\frac{\beta'}{\beta'-\beta}\}$. Then, 
$$
\mathbb E[|\widehat X_t|^{\frac{\beta'}{\beta'-\beta}}] \leq \mathbb
E[|\widehat X_t|^{n^*}]^{^{\frac{\beta'}{n^*(\beta'-\beta)}}} \leq C t^{^{\frac{\beta'}{n^*(\beta'-\beta)}}}
$$
for some $C<\infty$. Thus,
\begin{align}
\mathbb E[e^{\beta \overline X_\infty}] \leq \exp\left(\beta
  C\int_0^\infty dt\, t^{\frac{1}{n^*}-1}\,
    \mathbb E[e^{\beta' X_t}\mathbf 1_{X_t \geq 0}]^{\frac{\beta}{\beta'}} \right)\label{int}.
\end{align}

Further, 
$$
\mathbb E[e^{\beta' Y_t}\mathbf 1_{Y_t \geq 0}] = e^{t\psi(\beta)}
\widehat{\mathbb P}[X_t\geq 0],
$$
where $\frac{d\widehat {\mathbb Q}}{d\mathbb Q}|_{\mathcal F_t} =
\frac{e^{\beta' X_t}}{\mathbb E[e^{\beta' X_t}]}. $
By Cramer's theorem, 
$$
\lim_{t\to \infty}\frac{1}{t}\log \widehat{\mathbb P}[X_t\geq 0] = -\inf_{x\geq
  0}\Lambda^*(x) = - \Lambda^*(0),
$$
where 
$$
\Lambda^*(x) = \sup_{\lambda} \{\lambda x - \log \widehat{\mathbb
  E}[e^{\lambda X_1}]\} = \sup_{\lambda} \{\lambda x - \psi(\lambda +
\beta) +\psi(\beta)\}. 
$$
Therefore,
$$
\lim_{t\to \infty} \frac{1}{t}\log \left(e^{t\psi(\beta)}
\widehat{\mathbb P}[X_t\geq 0] \right) = \inf_{\lambda} \psi(\lambda +
\beta) < 0,
$$
which shows that the integral in \eqref{int}
converges. 
\end{proof}

\end{document}